\def\P{\mathbb{P}}
\def\E{\mathbb{E}}
\def\abs#1{{|{#1}|}} %absolute value
\def\floor#1{\lfloor{#1}\rfloor} %floor
\def\c#1{{\cal #1}}
\def\r#1{{\rm #1}}
\DeclareMathOperator{\var}{var}
\theoremstyle{plain}
\newtheorem{theorem}{Theorem}[section]
\newtheorem{lemma}[theorem]{Lemma}
\theoremstyle{definition}
\newtheorem{definition}[theorem]{Definition}
\def\cover{
%\thispagestyle{empty}
%\phantom{dummy}
\centerline{\large Dynamic monopolies with randomized starting configuration}  
\vskip.5cm
\centerline{Tom\' a\v s Kulich}
\vskip1cm
\centerline{Department of Computer Science}
\centerline{Faculty of Mathematics, Physics and Informatics}
\centerline{Comenius University}
\centerline{Mlynsk\' a dolina, 84248 Bratislava, Slovak Republic}
\centerline{\small \emph{E-mail:} kulich@dcs.fmph.uniba.sk}
\vskip1cm
\vskip1cm

\centerline{\textbf{Abstract}}
Properties of systems with majority voting rules have been exhaustingly studied. In this work we focus on the randomized case - where the system is initialized by randomized initial set of seeds. Our main aim is to give an asymptotic estimate for sampling probability, such that the initial set of seeds is (is not) a dynamic monopoly almost surely. After presenting some trivial examples, we present exhaustive results for toroidal mesh and random 4-regular graph under simple majority scenario.
}
\begin{document}
\cover

%\begin{keywords}
%Random graph, random subgraph, hypercube, diameter
%\end{keywords}
%}}} 

\section{Introduction} %{{{
Idea of \emph{majority voting} is commonly used to resolve many problems related
to achieving consistency between different parts of distributed computation. For
example, majority voting is used to preserve data consistency when updating
copies of the same data. Also, it is quite common to use majority voting to
resolve inconsistencies in distributed database management. Majority based
systems were also successfully used by Agur to examine the plasticity and
precision of the immune response in \cite{1}. 

The model for the system is as follows: Let $G=(V,E)$ be simple undirected graph
of size $n$. Every vertex has its color which is either black or white, and
represents the state of the node (for example black = faulty, white = not
faulty). By $S$ we shall denote the set of black vertices at the beginning of
the process. These vertices are also called \emph{seeds}. By evolution of such
system (or, the \emph{coloring process}) we shall mean the synchronous process
where in each step each vertex adjusts its color according to colors of its
neighbors and its internal contamination and decontamination rules.
(De)contamination rules determine under what configuration of its neighbors the
white (black) vertex turns black (white). In this work we focus on the case when
there is no decontamination rule, and the contamination rule is the simple
majority rule. This means that the white vertex turns black if at least half of
its neighbors are black and there is no possibility for black vertex to turn
white. Set of seeds $S$ is called \emph{dynamic monopoly}, (or
\emph{dynamo} for short) if the corresponding coloring process leads to
monochromatic black graph. For more rigorous definition, see section
\ref{subsec:torus_preliminaries}.

A significant attention was given to this model, and many interesting results
were obtained. Probably the most basic (but certainly not trivial) question asks
for determining the minimal cardinality of a dynamo on a fixed graph $G$
\cite{FGS01, FLLPS04}. Another interesting parameter of a dynamo besides its
cardinality is the time that is needed for contamination to spread. This was
analyzed for the first time in the literature in \cite{FGS01, FLLPS04}. Several
works are related to more advanced topics such as decontamination of the system
by external agents \cite{FNS05}. Finally, \cite{KR01} defines the terminology
of immune subgraphs and asks how the immune subgraph of a certain graph looks
like.

All these tasks were solved for small class of graphs. Close attention is given
to the ring and its several modifications \cite{FGS01}, as well as to the torus
and its modifications \cite{FLLPS04} (toroidal mesh, torus cordalis, torus
serpentinus). Many results exist also for hypercube and binary tree.

Other possible questions ask about the minimal cardinality of a dynamo on
arbitrary graphs. It is proved that on general directed graph the minimal dynamo
has at most $0.727|V|$ vertices. It is likely that this result will be improved.
On the other hand, in the case of general undirected graphs the minimal dynamo
consists of at most $\floor{V/2}(+1)$ vertices (depending on the exact
specification of contaminating rules), this result is proved to be sharp
\cite{16}.

There exist some interesting results about such systems even in the
''most general'' case, where contamination rule is that the vertex is
contaminated if at least fraction $\alpha$ of its neighbors are black
\cite{dynn1, dynn2, dynn3}. These works ask under what condition at least some
fraction $\delta$ of all vertices is turned black.

 Finally, we mention several works that we find most related to the paper. For a
randomly chosen set of initial black vertices, Gleeson and Cahalane \cite{dynn2}
gave an exact formula for the expected fraction of black vertices at the end in
tree-like graphs. In \cite{dynn1, dynn4} authors gave their estimate for minimal
number of black vertices needed for re-coloring of at least fraction $\delta$ of
all vertices on Erd\H{o}s-R\' enyi random graph. For more thorough survey, see \cite{Floc09, Pel02}.

 Motivation for studying coloring process induced by random set of seeds is
quite straight-forward. For example, considering vertices as computing nodes,
each node can fail with probability $p$, independent on failure of other nodes.
Although this looks like a very common scenario, there exists only few results
about systems initialized with random initial coloring. Given graph $G$ We
consider random initial set of seeds $S^p=S^p(G)$ as the set containing any
vertex of $G$ with probability $p$ (independently on other vertices). Naturally,
for every fixed graph $G$ this gives us some probability that the random set of
seeds is a dynamo.  Determining these probabilities analytically is quite hard
(if not impossible) and moreover, it can be done numerically with sufficient
accuracy. Therefore, we shall try to obtain asymptotic results of the form:
assuming 4-regular graph with $n$ vertices, random set of seeds $S^{0.12}$ is
dynamo with high probability (w.h.p.). On the other hand, $S^{0.10}$ is not a
dynamo (w.h.p.).

In many situations, to determine the minimal value of $p$ that $S^p$ is (w.h.p)
dynamo is trivial. For example, assuming Erd\H{o}s-R\' enyi random graph
$G(n,p')$ for fixed $p'$, $S^p$ forms a dynamo almost surely (a.s.) if $p>1/2 +
\varepsilon$ (where $\varepsilon>0$ is arbitrarily small constant) and does not
form a dynamo a.s. if $p<1/2 - \varepsilon$ ( this follows from Chernoff
bounds and Markov inequality). If in the same model we allow $p'$ to be
dependent on $n$, but not to be significantly decreasing, the same results can
be easily derived. Finally, if $p'$ decreases significantly with $n$ (for
example $p=c/n$), the graph contains (with probability tending to some $\gamma >
0$) some isolated vertices. Therefore until $p$ is not quite close to $1$ (so
close that all isolated vertices are a.s. seeds) we can not say that $S^p$ is w.h.p.
dynamo. Another trivial example is the toroidal mesh under strict majority
contamination rule (that is the white vertex turns black only if it has at least
three black neighbors). In this case, any square of size $2 \times 2$ forms an
immune subgraph (that is such subgraph that turns black only if some of its
vertices are seeds).  Therefore, the sampling probability that would form
dynamic monopoly must prevent all such squares from being colored entirely
white. Once again, this can be done only for $p$ very close to $1$. 

However, the motivation presented in previous text leads to considering $p$ to
be a fault probability of a node in the network (or something alike). Therefore
we can assume $1-p$ not to be very small. This makes the cases where $p
\rightarrow 1$ non-realistic.

In this work we shall examine dynamic monopolies with random initial condition
on two types of underlying graphs. In the section \ref{sec:torus} we focus on
the toroidal mesh. The results are quite surprising - $S^p$ containing only a
fraction $o(1)$ of all the vertices a.s. form a dynamo. To be more concrete, we shall
show that there exist constants $\alpha, \beta$ such that if $p>\alpha /
\ln(n)$, then $S^p$ a.s. is a dynamo. Similarly, when $p<\beta / \ln(n)$, then
$S^p$ a.s. is not a dynamo. At the end of the section we present our attempt to
solve the problem numerically and we show ''measured'' $\alpha$ and $\beta$. As
it was said earlier, to form a dynamo on random graph $G(n,p')$ we need $p$ to
be close to $1/2$. This makes the results about  the toroidal mesh even more
interesting. The natural question arises, whether the reason that such low
values of $p$ are needed to form a dynamo comes from specific topology of
toroidal mesh, or whether it is just implied by the fact, that the degree of the
vertices is constant and low. This is the motivation for the section
\ref{sec:regular}, where we investigate the same questions but on random
$4$-regular graphs, that is the graphs with all the vertices having degree $4$.
We show that if $p \geq 0.12$, then the random initial coloring a.s. forms a
dynamo. Similarly, if $p \leq 0.10$, then the random initial coloring a.s. does
not form a dynamo.
%}}}

\section{Toroidal mesh} \label{sec:torus} %{{{ preliminaries

\subsection{Preliminaries} \label{subsec:torus_preliminaries}
By \emph{toroidal mesh of size $n$} we shall consider an undirected graph
$G=G(V,E)$ consisting of $n^2$ vertices labeled as $V[i,j]$ for $0 \leq i,j \leq
n$. The set of edges consists of all pairs $(V[i,j],V[(i+1)\mod n,j])$ and
$(V[i,j],V[i,(j+1)\mod n])$ for all $0 \leq i,j < n$. By \emph{rectangle} of
size $w \times h$ located at position $x, y$ we denote the subgraph of $G$
induced by vertices $V[i,j]$ for $x \leq i < x+w$ and $y \leq j < y+h$. Such
rectangle will be denoted as $V[x,y,w,h]$. The set of all vertices of the
rectangle $R$ having degree (in $R$) lower than $4$ is called
\emph{circumference} of $R$. Rectangles with size $w \times 1$ or $1 \times h$
are called \emph{lines}, rectangles with $w=n$ or $h=n$ are called vertical or
horizontal \emph{stripes}. Note that circumference of a stripe consists of two
closed lines.

Let $\psi: V \rightarrow \mathbb{N}$ and $S \subseteq V$ be the set of seeds. By
\emph{coloring process} induced by $S$ we shall denote progression of sets
$\c{N}(S,G,\psi)=S_0, S_1, \ldots$ where $S_0=S$ and $S_i \subseteq V$
represents the set of black vertices in the $i$-th step of the coloring process.
The relation between $S_i$ and $S_{i+1}$ is that $S_{i+1}=S_i \cup B_{i+1}$,
where $B_{i+1} \subseteq V$ is the set of all vertices $v$ that are white in the
$i$-th step of the coloring process (that is, $v \notin S_i$) and they have at
least $\psi(v)$ neighbors that are black in the $i$-th step of the coloring
process (that is, they belong to $S_i$). Furthermore, we shall set $B_0=S$ and
we shall say that vertices from $B_i$ are \emph{turned black in time} $i$.
Vertex $v$ is said to be turned black (by the coloring process) if it is turned
black in some time.  Since the object of our interest is the $4$-regular graph
under the simple majority scenario, we shall put throughout the paper
$\psi(v)=2$ for all $v \in V$. If for some $i$, the coloring $S_i$ contains all
vertices $V$, then we say that the corresponding $S$ is \emph{dynamic monopoly}
or \emph{dynamo} (with respect to $G$ and $\psi$). 

Vertex that belong (does not belong) to set of seeds $S$ shall be called
$S$-vertex (non-$S$-vertex). Similarly, rectangle (line) consisting of $S$ vertices
(non-$S$-vertices) is called $S$-rectangle ($S$-line, non-$S$-rectangle, non-$S$-line).

We want to analyze the coloring process with random initial condition. For this
purpose, we define the random set of seeds $S^p=S^p(G)$. This is obtained as follows:
Every vertex $v$ of $G$ is contained in $S^p$ with probability $p$, independently on
the other vertices. All possible set of seeds form together with
adequate probabilistic measure a probabilistic space of random set of seeds
$\c{S}^p=\c{S}(G)^p$. We put $S^p \in \c{S}^p$ throughout the text. Naturally, probability
$p$ can depend on $n$. Since our aim is to obtain an asymptotic description of
the behavior of the system, all limits we use are for the case $n \rightarrow
\infty$. We shall say that event $A$ happens \emph{with high probability}
(w.h.p.), if $\lim_{n \rightarrow \infty} \P(A) \rightarrow 1$. Alternatively we
shall say that $A$ happens \emph{almost surely} (a.s.).

For abbreviation we shall put $l=l(n)=\floor{\ln(n)}$ and $q=1-p$.

%}}}

\subsection{Lower bound} \label{section lower} %{{{ uvodne kecy k lower boundu
In this section we show that unless $p$ is high enough, there w.h.p. exists
covering of $G$ by \emph{cages} (for the definition of cage see below), that prevents the black
vertices from spreading. The result of this section is stated in the following
lemma and proved at the end of the section.

\begin{theorem} \label{not black}
If $p \leq 0.012 / \ln(n)$, then the random set of seeds $S^p$ w.h.p. is not
a dynamo on graph $G$.
\end{theorem}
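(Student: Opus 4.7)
My plan is to produce, with high probability, a non-seed subset of $V$ -- a \emph{cage} -- containing at least one vertex that stays white throughout the colouring process, so that $S^{p}$ fails to be a dynamo.

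The underlying combinatorial principle is a simple \emph{stability lemma}: if $W\subseteq V\setminus S^{p}$ is a set of non-seeds in which every $v\in W$ has at most one neighbour in $V\setminus W$, then a trivial induction on the time step shows that $W$ stays all-white forever, since no vertex of $W$ can accumulate the two black neighbours needed to flip under the simple-majority rule $\psi\equiv 2$. On the $4$-regular toroidal mesh this condition is strictly satisfied only by non-contractible configurations (e.g.\ an empty width-$2$ stripe wrapping around a cycle of $G$), and such a stripe is empty only with probability $q^{2n}=e^{-\Theta(n/\ln n)}$, which is far too small for $p=\Theta(1/\ln n)$. A cage must therefore be a contractible structure that enforces the stability condition only approximately, with the behaviour at its corners discharged by a dynamical argument.

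I would accordingly take a cage to be an $l\times l$ non-seed rectangle $R$ (with $l=\floor{\ln n}$) together with a corner-protection gadget of $o(l^{2})$ extra non-seed vertices at each of its four corners. The gadget would be designed so that a case analysis of the colouring dynamics near each corner of $R$, exploiting the low seed density $\Theta(1/\ln n)$ outside $R$, rules out any vertex of $R$ ever collecting two black neighbours. With $p\leq 0.012/\ln n$, the probability that a given translate of the cage lies entirely in $V\setminus S^{p}$ is at least
\[
q^{(1+o(1))l^{2}} \;\geq\; n^{-0.012-o(1)}.
\]
Tiling $V$ by $\Theta(n^{2}/l^{2})$ disjoint cage-sized windows and using independence of the seed configuration on disjoint windows, the probability that no window contains a cage is at most $(1-n^{-0.012-o(1)})^{\Theta(n^{2}/l^{2})}\to 0$, so a cage -- and with it a vertex that never turns black -- exists w.h.p.

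The main obstacle, and the technical heart of the argument, is the design of the corner-protection gadget together with the dynamical verification that the cage really is stable. Quantifying how many seeds outside $R$ must cooperate to reach a corner of $R$, and choosing the gadget small enough that the probability bound above survives, is precisely what pins down the numerical constant $0.012$ in the statement of the theorem.
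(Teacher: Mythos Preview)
Your proposal has a genuine gap: the ``corner-protection gadget together with the dynamical verification'' that you defer to the end is not a technical detail but the entire content of the lower bound. No finite, contractible set $W\subseteq V$ can satisfy your stability condition (every $v\in W$ has at most one neighbour outside $W$): any such $W$ has an extremal vertex with two neighbours outside, and patching that corner with extra non-seed vertices only creates new corners. So the stability of your $l\times l$ white square must rest on the environment, i.e.\ on showing that black never reaches both outside neighbours of a corner. But whether black reaches a given site is a \emph{global} question about the coloring dynamics --- precisely what the theorem is asserting --- and a local gadget of $o(l^{2})$ non-seeds cannot settle it. Note also that your first-moment calculation $q^{(1+o(1))l^{2}}\ge n^{-0.012-o(1)}$ gives existence of a non-seed $l\times l$ square for any $p<2/\ln n$, so it cannot be what ``pins down'' the constant $0.012$; you yourself say the constant must come from the gadget analysis you have not carried out.

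The paper's argument runs in the opposite direction: instead of protecting one white region from the outside, it \emph{confines every seed}. Working on the $2\times 2$-coarsened torus $G'$ (so $p'\sim 4p$), a \emph{cage} is the non-$S'$ circumference of a rectangle; if every $S'$-vertex lies in the interior of some cage and the cage interiors are disjoint, then black can never cross any cage boundary, and $S^{p}$ is not a dynamo. The cages are produced globally: one shows that for $p\le\beta/\ln n$ with $\ln(1-e^{-12\beta})=-2$ (numerically $\beta\approx 0.012$), every $l\times 3l$ rectangle in $G'$ contains a non-$S'$ line (a union bound over $2n^{2}$ rectangles). Slicing $G'$ into width-$l$ stripes and threading these non-$S'$ lines together yields closed non-$S'$ paths around each stripe, and the horizontal lines between consecutive paths cut the annuli into cages with disjoint interiors covering all seeds. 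The constant $0.012$ thus arises from the probability that an $l\times 3l$ rectangle is ``bad'', not from any single white block; the argument is a covering of the seeds, not a survival argument for a white island.
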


\noindent In this section we shall assume $n$ to be divisible by $2l$. We shall
show that this is only a technical complication at the end of this section.

Let $G$ be a toroidal mesh of size $n \times n$ and let $S^p \in \c{S}^p$ be
random set of seeds. From this graph we can derive corresponding
graph $G'$ and corresponding set of seeds $S'$ as follows: $G'$ is a toroidal mesh of a size $n/2
\times n/2$. We shall say that $V'[x,y]$ corresponds to foursome of vertices
$V[2x,2y], V[2x+1,2y], V[2x,2y+1], V[2x+1,2y+1]$. Vertex $v' \in V'$ belongs to
$S'$ if at least one of its corresponding vertices is $S^p$-vertex. 
Note that probability $p'$ that the vertex in $G'$ is $S'$-vertex is 
$$p'=1-(1-p)^4 \leq 4p$$
and for $p \rightarrow 0$ we get $p' \sim 4p$. %}}}

\begin{definition} %{{{ definicia cage
By \emph{cage} we denote the circumference of some rectangle $R$ in $G'$
consisting of non-$S'$-vertices only. In more detail, for any integers
$x,y$, $w \leq n$, $h \leq n$ if the vertex of $G'$ on the coordinates $[x+i,y],
[x+i,y+h], [x,y+j], [x+w,y+j]$ is not member of $S'$ for all $0\leq i < w$ and
$0 \leq j < h$, then these vertices form cage in $G'$ and vertices corresponding
to these form cage in $G$. By the \emph{interior of the cage} we denote those
vertices of $R$ that do not lie on the circumference.  
\end{definition} 

%}}}

\begin{lemma} \label{cage} %{{{
If there exists a set $\c{K}$ of cages such that
\begin{itemize}
\item every $v' \in S'^p$ belongs to the interior of some cage $K \in
\c{K}$;
\item for $K_i, K_j \in \c{K}$ and $K_i \neq K_j$ the interiors of the cages $K_i$
and $K_j$ are disjoint,
\end{itemize}
then $S^p$ is not dynamic monopoly on $G$.
\end{lemma}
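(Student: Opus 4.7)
My plan is to prove the lemma through an inductive invariant on the coloring process: at every time step every $G$-vertex lying in the $2\times 2$ block of a $G'$-vertex on the circumference of some cage in $\c{K}$ stays white. Call such a $G$-vertex a \emph{wall vertex}. The base case is immediate: by the cage definition, every $G'$-vertex on a circumference lies outside $S'$, so the corresponding block in $G$ contains no seeds and is all white at time $0$.

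For the inductive step, fix a wall vertex $u$ sitting in the block $B$ corresponding to some $v' \in C'_K$. Two of $u$'s four neighbors lie inside $B$ itself, which is a wall block and therefore white by the inductive hypothesis. The remaining two neighbors of $u$ lie in two adjacent blocks $B_1,B_2$ corresponding to a specific pair of $G'$-neighbors $v'_1,v'_2$ of $v'$, determined by the position of $u$ within $B$. The core combinatorial claim is that at most one of $v'_1,v'_2$ lies in the interior of any cage of $\c{K}$, so $u$ has at most one black neighbor at time $t$ and cannot turn black at time $t+1$. I establish this claim by case analysis on the position of $v'$ on $C'_K$. For a non-corner position, $v'$ has exactly one $G'$-neighbor in the interior of $R'_K$, one strictly outside $R'_K$, and two on $C'_K$ itself; checking the four positions of $u$ inside $B$ shows that $\{v'_1,v'_2\}$ always contains at least one $C'_K$-vertex, whose block is a wall block and white by induction. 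For a corner position, neither of the relevant $G'$-neighbors of $v'$ lies in the interior of $R'_K$, and I rule out the possibility that both lie in interiors of other cages: if $v'_i$ lies in the interior of some $K'\in\c{K}$, then $v'$ is a $G'$-neighbor of an interior vertex of $K'$ and must itself lie in $R'_{K'}$, hence (by disjointness of interiors) on $C'_{K'}$, so $B$ also belongs to $K'$'s wall; pursuing this constraint for both $v'_1$ and $v'_2$ produces, on a shared corner, a configuration incompatible with the disjoint-interiors hypothesis.

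With the invariant established, the proof concludes quickly. Pick any $G'$-vertex $v^\ast$ outside every cage rectangle; such a vertex exists on the torus because the cage interiors are disjoint, the cages are bounded, and their total ``footprint'' cannot cover all of $G'$. The corresponding $2\times 2$ block in $G$ is initially white, and each of its $G$-vertices has every neighbor either inside the same block, inside a wall block (white by the invariant), or inside the block of another exterior $G'$-vertex (initially non-seed by the same argument); a routine bootstrap-type check then shows that exterior blocks stay white for all time, so $S^p$ fails to be a dynamic monopoly. The main obstacle is the corner subcase of the inductive step, where a naive count would permit two interior-block neighbors for a single wall vertex; handling it cleanly requires combining the disjoint-interiors hypothesis with the structural fact that interior vertices of a cage force all of their $G'$-neighbors to remain in the cage's rectangle, a geometric constraint strong enough to exclude the troublesome shared-corner configurations.
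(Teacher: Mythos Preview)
Your approach attempts far more than the paper does (the paper gives only the sentence ``seeds are trapped inside cages; induct on time''), so the comparison reduces to whether your detailed argument actually goes through. It does not, and the gap is in exactly the place you flag as delicate.

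The invariant ``every wall block stays white'' is not preserved in general. Nothing in the hypotheses prevents a circumference vertex of one cage from lying in the interior of another, and when that happens your invariant breaks. Concretely, take $K$ with $R'_K=V'[0,0,5,5]$ and $K'$ with $R'_{K'}=V'[-2,-2,4,4]$. Then $I_K=\{(i,j):1\le i,j\le 3\}$ and $I_{K'}=\{(i,j):-1\le i,j\le 0\}$ are disjoint, so the lemma's second hypothesis holds. The vertex $v'=(0,0)$ is a corner of $R'_K$ and lies in $I_{K'}$. If the blocks of $(-1,0),(0,-1),(-1,-1)\in I_{K'}$ consist entirely of seeds (which is permitted: none of these three $G'$-vertices lies on $C_K$ or $C_{K'}$), then the $G$-vertex $V[0,0]$ in the block of $v'$ has two black neighbours at time $0$ and turns black at time $1$. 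So a wall block for $K$ acquires a black vertex, falsifying your invariant---even though the lemma's conclusion still holds in this example (the black region stabilises inside $I_{K'}$'s blocks).

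The specific step that fails is your corner-case deduction ``$v'$ must itself lie in $R'_{K'}$, hence (by disjointness of interiors) on $C'_{K'}$''. Disjointness says $I_K\cap I_{K'}=\emptyset$; since $v'\in C_K$ and $C_K\cap I_K=\emptyset$, this tells you nothing about whether $v'\in I_{K'}$. In the example above $v'\in I_{K'}$ outright. Your closing paragraph inherits a second problem: the existence of a $G'$-vertex outside every cage rectangle is not guaranteed (in the paper's own construction the cages tile $G'$, so no such vertex exists).

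The repair is to invert the invariant: show that every $G$-vertex whose $G'$-block lies \emph{outside every interior} stays white. The corner analysis then hinges on showing that if both out-of-block neighbours $v'_1,v'_2$ of such a $v'$ lie in interiors $I_{K_1},I_{K_2}$, then necessarily $v'_1\in C_{K_2}$ and $v'_2\in C_{K_1}$ (a short coordinate check), so neither block carries seeds; one then needs a further step to rule out those blocks turning black later. This is more work than either you or the paper spells out.
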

\begin{proof}
We can assume all seeds to be ''trapped'' inside cages which they can not
escape from. More rigorous proof can be done by induction on the number of
steps of the coloring process.
\end{proof} %}}}

%{{{ definicia dobreho rectanglu
\noindent Let us consider rectangle $R=V'[x,y,w,h]$ of $G'$. We shall call this rectangle
\emph{good}, if 

\begin{itemize}
\item $w \geq h$ and there exists a non-$S'$-line $V'[x,y+i,w,1]$ for some $0 \leq i
< h$ of $R$,  or
\item $w < h$ and there exists a non-$S'$-line $V'[x+i,y,1,h]$ for some $0 \leq
i < w$ of $R$.
\end{itemize}

If none of the above holds, we shall call $R$ \emph{bad}.
%}}}

\begin{lemma} \label{line} %{{{
Let $\alpha$ be arbitrarily low positive number, $\beta$ the root of the 
equation $$\ln(1-e^{-12\beta})=-2$$ and let $p$ be equal to $(\beta-\alpha)/ \ln(n)$. Then $G'$
a.s. does not contain bad rectangle of size $l \times 3l$ or $3l \times l$.  
\end{lemma}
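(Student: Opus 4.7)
The plan is a standard first-moment argument: fix a rectangle, compute the probability it is bad, and then union-bound over all positions and both orientations. The key observation is that ``bad'' decomposes into $l$ independent events, one per line, so the probability factorizes exactly.

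First I would treat a single rectangle $R = V'[x,y,3l,l]$ of horizontal orientation. By definition, $R$ is bad iff each of its $l$ horizontal lines $V'[x,y+i,3l,1]$ ($0 \leq i < l$) contains at least one $S'$-vertex. These lines are pairwise vertex-disjoint in $G'$, and since distinct vertices of $G'$ correspond to disjoint foursomes of vertices in $G$, the $S'$-statuses of vertices in different lines are mutually independent. A fixed line of length $3l$ is non-$S'$ with probability $(1-p')^{3l} = (1-p)^{12l}$, so
\[
\P(R \text{ is bad}) = \bigl(1 - (1-p)^{12l}\bigr)^l.
\]

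The next step is an asymptotic evaluation. With $p = (\beta - \alpha)/\ln n$ and $l = \floor{\ln n}$ one has $lp \to \beta - \alpha$ and $lp^2 \to 0$, so a Taylor expansion gives $(1-p)^{12l} = \exp(-12lp + O(lp^2)) \to e^{-12(\beta-\alpha)}$. Setting
\[
\gamma(\alpha) := -\ln\bigl(1 - e^{-12(\beta-\alpha)}\bigr),
\]
the defining equation of $\beta$ gives $\gamma(0) = 2$, and $\gamma$ is strictly increasing in $\alpha$ (as $\alpha$ grows, $e^{-12(\beta-\alpha)}$ grows, $1-e^{-12(\beta-\alpha)}$ shrinks, and its negative log grows). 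Hence $\gamma(\alpha) > 2$ for every $\alpha > 0$, and
\[
\P(R \text{ is bad}) = \exp\bigl(-\gamma(\alpha)\, l + o(l)\bigr) = n^{-\gamma(\alpha) + o(1)}.
\]

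Finally I would conclude by a union bound. The number of positions $(x,y)$ for a $3l \times l$ rectangle in $G'$ is at most $(n/2)^2 \leq n^2$, and the $l \times 3l$ orientation is identical by symmetry, giving at most $2n^2$ rectangles to control. Multiplying with the previous estimate yields
\[
\P\bigl(\exists \text{ bad } l \times 3l \text{ or } 3l \times l \text{ rectangle in } G'\bigr) \;\leq\; 2\, n^{2-\gamma(\alpha) + o(1)} \;\longrightarrow\; 0,
\]
since $\gamma(\alpha) > 2$, which is the claim.

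The only genuinely delicate point is the asymptotic evaluation $(1-(1-p)^{12l})^l$: the base tends to a constant in $(0,1)$ while the exponent grows like $\ln n$, so one has to keep the implicit $o(1)$ errors small enough (say, estimate $\ln(1-(1-p)^{12l})$ to additive $o(1)$) to preserve the strict inequality $\gamma(\alpha)>2$ coming from the monotonicity argument. Once this is handled, the slack over the $n^2$ union bound is automatic; the whole proof is then essentially a careful computation.
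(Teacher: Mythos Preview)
Your proposal is correct and follows essentially the same argument as the paper: compute $\P(R\text{ bad})=(1-(1-p')^{3l})^l$, use $(1-p')^{3l}\to e^{-12(\beta-\alpha)}$, and union-bound over the at most $2n^2$ rectangles. You are in fact more careful than the paper, spelling out the monotonicity argument that gives $\gamma(\alpha)>2$ whereas the paper simply asserts the limit $p_{\rm all~recs}\to 0$.
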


\begin{proof}
As was mentioned before, $p' \sim 4p=4(\beta-\alpha)/\ln(n)$
The probability that the rectangle with width $w$ and height $h$, $w>h$ is bad
equals to:
$$p_\r{one~rec}=(1-(1-p')^w)^h.$$
There are $2n^2$ different rectangles of given dimensions, therefore the
probability that there will be at least one bad rectangle of given dimensions
can be upper-bounded using
union bound by:
$$p_\r{all~recs} \leq 2n^2 \cdot p_\r{one~rec}.$$
By letting $w=3l$, $h=l$, and using $(1-p')^w \sim e^{-12(\beta-\alpha)}$ we
immediately get 
$$p_\r{all~recs} \rightarrow 0 ~~~ \r{as} ~~~ n \rightarrow \infty.$$
\end{proof} %}}}

%{{{ Kus textu ktorym sa dorazi lower bound

\begin{figure} %{{{ obrazok
\thispagestyle{empty}
\caption{Left: Three consecutive stripes. In every (dashed) rectangle there
exists a
non $S'$ line (bold lines). These lines form path around the stripe. Right-top: Black
rectangle $R$ will expand itself to $ABCD$ because of incrementing vertices $v_i$.
Right-middle: Every square $2l \times 2l$ (dashed) w.h.p. contains this  
configuration of non $S'$ lines (bold lines). Bottom: results of a numerical study.}
\label{obrazok} 
\vbox{
  \hbox{
    \kern1cm
    \hbox{
      \includegraphics[width=0.25\textwidth]{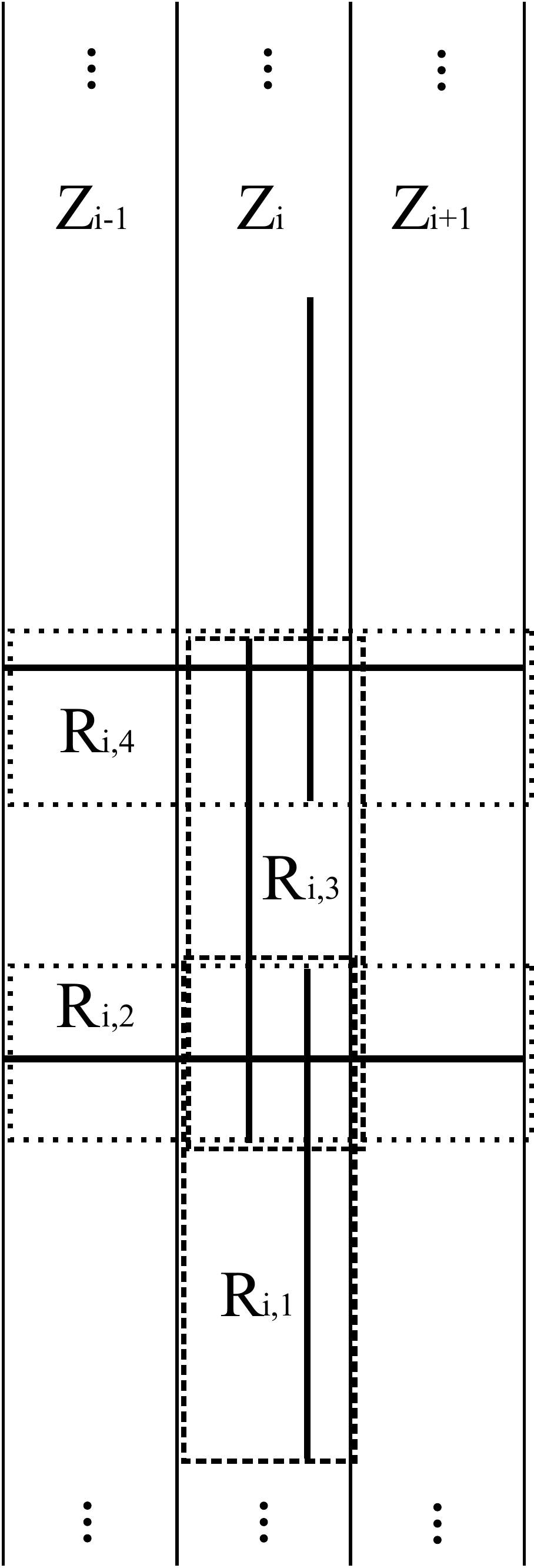} 
    }
    %\hfil
    \kern1.5cm
    \vbox{
      \hbox{
        \includegraphics[width=0.45\textwidth]{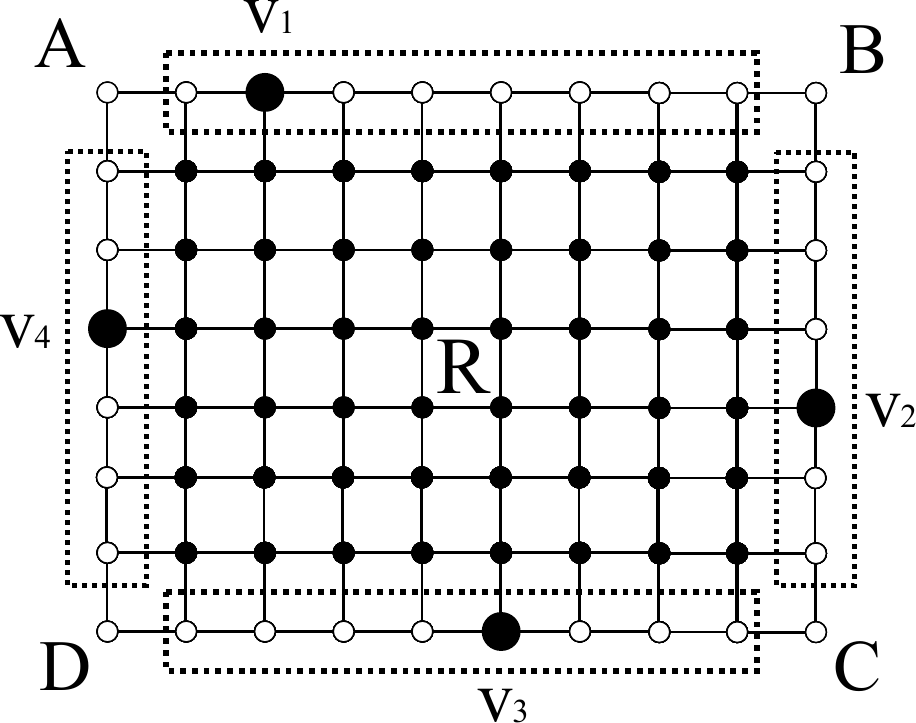} 
      }
      \kern1cm
      \hbox{
        \kern1cm
        \includegraphics[width=0.25\textwidth]{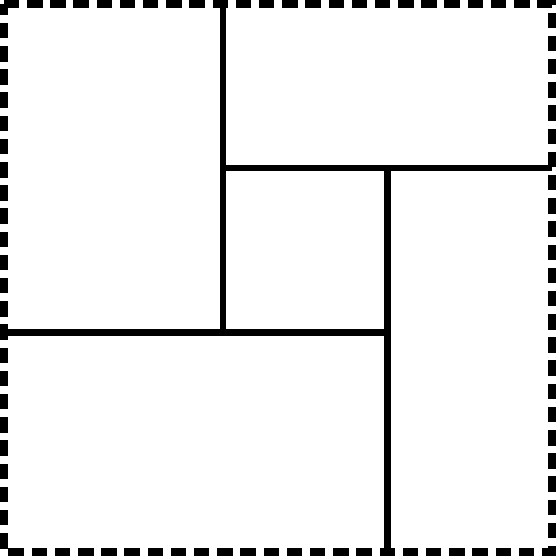} 
      }
    }
  }
  \includegraphics[width=0.90\textwidth]{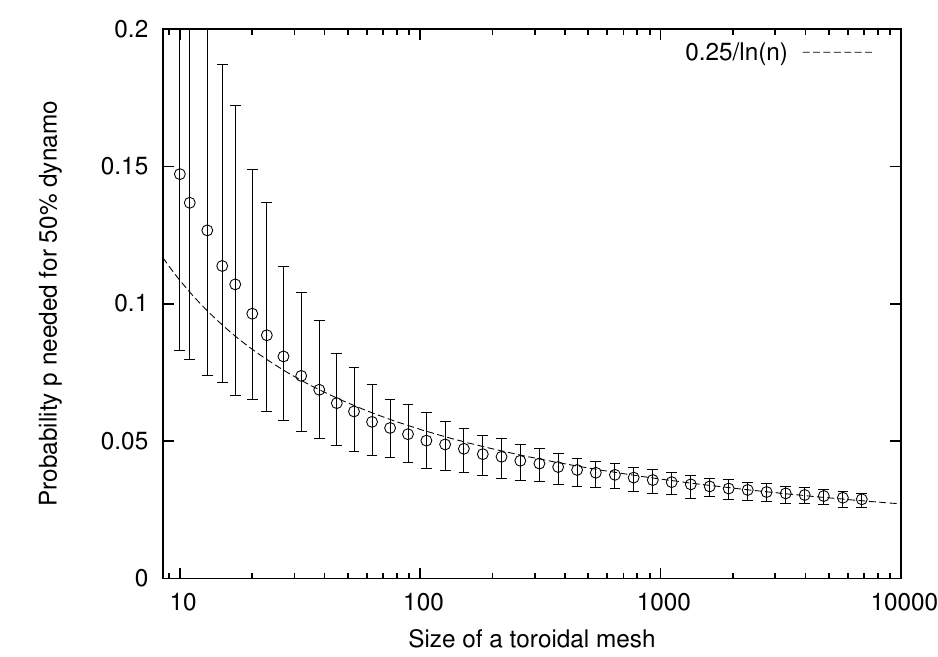} 
}
\end{figure}
%}}}

Now we show, how it is possible to satisfy preconditions of Lemma \ref{cage}.
Assuming that $n$ is divisible by $2l$ we divide $G'$ to $n/(2l)$ vertical stripes of
width $l$:
\begin{eqnarray*}
Z_0 &=& V'[0,0,l,n/2] \\  
Z_1 &=& V'[l-1,0,l,n/2] \\
Z_2 &=& V'[2l-1,0,l,n/2] \\
&\ldots&
\end{eqnarray*}
In each of these stripes we find closed path consisting of non $S'$ vertices 
that goes ''around'' this strip. This will be done as follows: In every stripe
$Z_i$ we construct a sequence of rectangles $R_{i,1}, R_{i,2}, \ldots$ as in the
top-left part of Figure \ref{obrazok}. Due to Lemma \ref{line} in each $R_{i,j}$
there is a non-$S'$-line $L_{i,j}$. Suitable parts of these lines form desired
path as is shown in the figure. Let us denote the
path that goes around the stripe $Z_i$ by $P_i$ and let $Q_i=\{ L_{i,j} | j \in
\mathbb{N}\}$. Let us now consider all horizontal lines from $Q_i \cup Q_{i+1}$.
These clearly divide the whole area between paths $P_1$ and $P_2$ to cages with
disjunct interiors. By repetition of this argument we can show that the area
between any two paths $P_i$ and $P_{(i+1) \bmod n}$ can be covered by cages with
respect to preconditions of Lemma \ref{cage}. For proof of Theorem \ref{not
black} it is therefore sufficient to calculate appropriate $\beta$ and $p$ that
would satisfy precondition of Lemma \ref{line}.

The very last thing needed to finish the proof of Theorem \ref{not black} is to deal
with such toroidal meshes that do not have its size divisible by $2l$. We shall
use the following intuitive lemma:

\begin{lemma}
\label{dlzka vplyvu}
Let $H$ be the toroidal mesh, $S_H$ the set of seeds, and let $t$ be such a time
after which no vertex changes its color to black. Let $S_H^*$ be another set of
seeds satisfying $S_H \subset S_H^*$ and let us denote $D:=S_H^*-S_H$. Then every vertex
that turns black in $\c{N}(S_H^*,H,2)$ after time $u>t$ is distant at most $2u$
from some vertex of $D$.  
\end{lemma}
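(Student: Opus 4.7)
The plan is to prove, by induction on $\tau$, a sharper statement from which the lemma follows with room to spare: for every $\tau \geq 0$, every vertex $v$ with $v \in S_\tau^* \setminus S_\tau$ satisfies $d(v, D) \leq \tau$; here $S_\tau$ and $S_\tau^*$ denote the black sets at time $\tau$ of $\c{N}(S_H, H, 2)$ and $\c{N}(S_H^*, H, 2)$, respectively, and $d$ is the graph distance in $H$. First I would record two easy preliminaries. \emph{Monotonicity}: since $S_H \subseteq S_H^*$, an obvious induction on $\tau$ shows $S_\tau \subseteq S_\tau^*$. \emph{Stability}: since $\c{N}(S_H, H, 2)$ does not change after time $t$, $S_\tau = S_t$ for every $\tau \geq t$.

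The base case $\tau = 0$ is immediate, since $S_0^* \setminus S_0 = S_H^* \setminus S_H = D$. For the inductive step, take $v \in S_\tau^* \setminus S_\tau$; since $S_{\tau-1} \subseteq S_\tau$ we also have $v \notin S_{\tau-1}$. If $v$ already lies in $S_{\tau-1}^*$, then $v \in S_{\tau-1}^* \setminus S_{\tau-1}$ and the inductive hypothesis immediately gives $d(v, D) \leq \tau - 1$. Otherwise $v$ newly turns black at time $\tau$ in $\c{N}(S_H^*, H, 2)$, so at least two neighbors of $v$ lie in $S_{\tau-1}^*$. If all such neighbors were also in $S_{\tau-1}$, then the majority rule applied to $\c{N}(S_H, H, 2)$ would force $v \in S_\tau$, a contradiction. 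Hence some neighbor $w$ of $v$ satisfies $w \in S_{\tau-1}^* \setminus S_{\tau-1}$; the inductive hypothesis yields $d(w, D) \leq \tau - 1$, and consequently $d(v, D) \leq 1 + (\tau - 1) = \tau$.

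To conclude, suppose $v$ turns black at time $u > t$ in $\c{N}(S_H^*, H, 2)$. Then $v \in S_u^*$, and also $v \notin S_u$: otherwise stability and monotonicity give $v \in S_t \subseteq S_t^* \subseteq S_{u-1}^*$, contradicting that $u$ is the first time $v$ is black in the starred process. Thus $v \in S_u^* \setminus S_u$, and the claim yields $d(v, D) \leq u \leq 2u$, as required.

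The main obstacle is identifying the right inductive invariant: the natural candidate, the ``extra'' set $S_\tau^* \setminus S_\tau$, is \emph{not} monotone in $\tau$ (a vertex may cease to be extra once the slower process $\c{N}(S_H, H, 2)$ catches up), so the induction has to be phrased per time step rather than on a cumulative set. Once that invariant is set up correctly, the step amounts to a one-line observation about how the majority rule differs between the two processes, and the rest is routine bookkeeping.
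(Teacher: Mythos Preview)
Your argument is correct. In fact you establish the sharper bound $d(v,D)\le u$ rather than $2u$; the inductive invariant $v\in S_\tau^{*}\setminus S_\tau \Rightarrow d(v,D)\le\tau$ is exactly the right thing to track, and your case split in the inductive step (either $v$ was already extra at time $\tau-1$, or some neighbor was) is clean and complete. The final paragraph correctly uses stability of the unstarred process after time $t$ together with monotonicity to show that a vertex newly black at time $u>t$ in the starred process must lie in $S_u^{*}\setminus S_u$.

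As for comparison with the paper: there is nothing to compare against. The paper introduces this lemma with the phrase ``We shall use the following intuitive lemma'' and gives no proof at all; it immediately proceeds to apply the statement to handle the case where $n$ is not divisible by $2l$. So your write-up supplies what the paper omits, and does so with a tighter constant. Your closing remark about the invariant not being monotone in $\tau$ is a nice observation, though in practice the per-step phrasing you chose is the natural one and the proof goes through without needing any cumulative set.
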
 

\noindent We assume that $n=2lk+r$ for some $k$ and $r<2l$. As before, we divide
$G$ to vertical stripes with width $2l$ (note that since we do not assume $n$
even as before, we are working with $G$ instead of $G'$) and one special stripe
with width $r$. As above, whole graph $G$ up to the special stripe can be
covered by cages. Every such a cage has the size of its interior limited to
$(2l) \cdot (6l) = 12l^2$. Therefore, if the special stripe contains only
non-$S^p$-vertices, then in time higher than $12l^2$ no vertex is turned black.
Therefore the preconditions of the Lemma \ref{dlzka vplyvu} are satisfied with
$H=G$, $D$ equal to the set of all $S^p$-vertices from the special stripe,
$S_H^*=S^p$, $S_H=S_H^* - D$ and $t=12l^2$. Therefore we know that every vertex
that turns black after time $t$ must lie within the distance $2t$ from some
black vertex from the special stripe. Note that dividing $G$ to the stripes can
be done in many ways - the special stripe can be located at arbitrary position.
If we assume that special stripe equals to $V[0,0,r,n]$, then every vertex that
turns black after time $t$ must has its $x$ coordinate within the interval
$\langle -2t,2t+r \rangle$. On the other hand, if the special stripe equals to
$V[\floor{n/2},0,r,n]$, then every vertex that turns black after time $t$ must
has its $x$ coordinate within the interval $\langle \floor{n/2}-2t,
\floor{n/2}+2t+r \rangle$. Since $t=o(n)$ and $r=o(n)$, these two intervals do
not overlap and therefore no vertex turns black in time $t+1$. This implies that
no vertex turns black anymore and therefore the initial coloring $S^p$ is not a
dynamo.
%}}}

\subsection{Upper bound} \label{section upper}%{{{
The result of this section is stated in the following theorem:

\begin{theorem} \label{black} 
If $p \geq 1.65/\ln(n)$, then the random set of seeds $S^p$ w.h.p. is a dynamo on graph $G$.
\end{theorem}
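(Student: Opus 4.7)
My plan is to exhibit a propagation mechanism complementary to the cage construction of Section~\ref{section lower}: rather than finding non-$S'$ structures that trap the seeds, I would find $S^p$-structures that drive the coloring outward. The central technical device is the following \emph{propagation lemma}. Suppose that during the coloring process some rectangle $R=V[x,y,w,h]$ has become entirely black, and the row $V[x,y-1,w,1]$ of $w$ vertices adjacent to one of its sides contains at least one $S^p$-vertex; then after at most $w$ further steps the entire adjacent row becomes black, enlarging $R$ by one row. The proof would be a short induction along the row starting at the seed: each intermediate white vertex acquires one black neighbor from $R$ below and, after the appropriate step, a second one from its previously blackened neighbor in the row. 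The analogous statement holds for all four sides (with $w$ replaced by $h$ on vertical sides).

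Given this lemma, the proof would split into two phases. \emph{Germ phase:} partition $G$ into disjoint windows of side $k=k(n)$, a suitable polynomial in $\ln n$, and bound from below the probability that the seeds inside a single window internally turn a sub-rectangle of side $\Omega(\ln n)$ entirely black. Because $p\ln n = 1.65 > 1$, a row of length $\ln n$ contains a seed with constant probability $1-e^{-1.65}$, and a Holroyd-type counting of internally spanned configurations should yield a per-window success probability $\gamma(n)$ with $\gamma(n)\cdot n^2/k^2 \to \infty$; by independence across windows, at least one ``germ rectangle'' exists w.h.p. \emph{Spreading phase:} starting from a germ of width $w_0=\Omega(\ln n)$, I would apply the propagation lemma repeatedly. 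When the current width is $w$, the probability that the adjacent row (or column) contains no seed is $(1-p)^w\le e^{-pw}$; once $w$ reaches $\ln^2 n$ this is at most $n^{-1.65}$, and a union bound over the $O(n)$ row/column extensions needed to wrap around the torus gives total failure probability $O(n^{-0.65})=o(1)$.

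The most delicate part will be the germ phase: showing that some window is internally spanned to width $\Omega(\ln n)$. The slack $p\ln n=1.65>1$ must enter in a non-trivial way---a naive union bound is insufficient, and I would have to count internally-spanned configurations carefully, in the spirit of Holroyd's analysis of 2D majority bootstrap percolation. A secondary subtlety is bridging the two phases: for widths $w$ between $\Omega(\ln n)$ and $\Omega(\ln^2 n)$, the per-step failure probability $(1-p)^w$ is only a constant, so a simple union bound will not suffice. In that intermediate range I would instead rely on the positive drift of the rectangle's growth (again guaranteed by $p\ln n>1$) to argue that the rectangle w.h.p.\ reaches the ``safe'' width $\ln^2 n$ without stalling, at which point the union-bound argument completes the spread to all of $G$.
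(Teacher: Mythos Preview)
Your overall architecture matches the paper's: the propagation lemma is exactly the paper's ``incrementing vertices'' observation, your germ/spreading dichotomy is the paper's $\Gamma_1$/$\Gamma_2$ split, and the second-moment (independence-across-windows) argument for the germ phase is Lemma~\ref{zaire}. So the skeleton is right, but two points deserve comment.

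First, the intermediate phase you introduce is an unnecessary complication. The paper simply takes the window side to be $h=2\lfloor\ln n\rfloor^3+1$, so the germ phase already delivers a black square of side $\Theta(\ln^3 n)$; Lemma~\ref{zambezi} then says every line of that length contains a seed w.h.p., and the spreading phase becomes a one-line union bound. Your ``positive drift'' heuristic for widths between $\ln n$ and $\ln^2 n$ is shaky as stated: once an adjacent row is seedless that side is stuck until a \emph{perpendicular} side grows and lengthens it, so the process is not a random walk with independent increments. Absorbing this range into the germ phase, as the paper does, costs only a $\ln\ln n$ in the exponent and avoids the issue entirely.

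Second --- and this is the real gap --- you have not identified the correct threshold for the germ phase, and your repeated ``$p\ln n=1.65>1$'' suggests you believe it is $1$. It is not. The per-window success probability is at least $\prod_{i\ \mathrm{odd}}(1-q^i)^4$, and the whole content of the proof is the explicit estimate
\[
\ln\prod_{i\ge 1}(1-q^i)\;=\;-\sum_{k\ge 1}\frac{1}{k(q^{-k}-1)}\;\ge\;-\frac{q}{p}\sum_{k\ge 1}\frac{1}{k^2}\;=\;-\frac{\pi^2}{6}\cdot\frac{q}{p},
\]
obtained via the Taylor expansion of $\ln(1-q^i)$ and Bernoulli's inequality $q^{-k}-1\ge k\,p/q$. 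This, together with $\prod_{i\ \mathrm{odd}}(1-q^i)\ge p\sqrt{\prod_{i\ge 1}(1-q^i)}$, gives $\P(\Gamma_1)\gtrsim p^2\exp(-\pi^2 q/(3p))$, and the condition $\E X\to\infty$ becomes $p\ln n>\pi^2/6\approx 1.6449$. The constant $1.65$ in the theorem is chosen to barely clear \emph{this} threshold, not $1$; a Holroyd-style argument that is not sharp enough to recover $\pi^2/6$ will not close. What you labelled ``the most delicate part'' is in fact the entire proof.
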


\noindent We want to illustrate the main idea of the proof first. The rigorous
proof is presented after few technical lemmas. Let $R$ be such square
that in some step of the coloring process consists of black vertices only. If in
each one of the four dashed lines (Figure \ref{obrazok} top-right) there exists at least
one seed (as are $v_1, v_2, v_3, v_4$ in the figure), then also the
square $ABCD$ will be black at some time. We shall call vertices $v_1, v_2, v_3,
v_4$ \emph{incrementing} with respect to $R$. Expansion of $R$ caused by incrementing
vertices can possibly continue until whole $G$ turns black. We shall show that we
can choose $R$ such that with probability tending to one this really will
happen.

Let $h=h(n)$ such that $h(n)$ is odd. Let us divide $G$ to the set of
$\floor{n/h}^2$ disjoint squares $U_i$ such that the size of $U_i$ is $h \times h$. The
event that $G$ turns black at some point in the coloring process is implied by
existence of specific square $U_k$ that satisfies:

\begin{itemize}
\item{$\Gamma_1$: }A square consisting of single vertex $v$ in the middle of
$U_k$ can grow due to the existence of incrementing vertices from
$S^p$ in such a way that in some time of the coloring process the
whole $U_k$ will be black. Note that the square of size $1 \times 1$ can grow
due to incrementing vertices to the square of size $3 \times 3$ even if it is
not black.

\item{$\Gamma_2$: }Square $U_k$ will grow due to the existence of incrementing
vertices from $S^p$, and at some time whole $G$ turns black. 
\end{itemize}

Let us note that although it is tempting to combine the conditions $\Gamma_1$ and
$\Gamma_2$, it can not be done easily. The reason is that for two 
different $U_i, U_j$, the corresponding $\Gamma_1$'s are independent
events. On the other hand, the ''spreading'' of the square $U_i$ to the whole $G$ is influenced
by (not) spreading of other squares $U_j$.  

We shall estimate probabilities of conditions $\Gamma_1$ and $\Gamma_2$ right after stating
following technical lemma.
%}}}

\begin{lemma} \label{zambezi} %{{{ 
Let $p>c/l$ for some constant $c$. Then every line of length $L:=l^3$ contains
w.h.p. at least one $S^p$-vertex.
\end{lemma}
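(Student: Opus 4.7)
The plan is a straightforward first-moment / union-bound argument, exploiting the fact that $L=l^3$ is much larger than $1/p$, so a single fixed line is overwhelmingly likely to contain a seed, while the total number of lines is only polynomial in $n$.

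First, fix one particular line of length $L=l^3$. The probability that it contains no $S^p$-vertex is $q^L=(1-p)^{l^3}$. Using the standard estimate $1-p\le e^{-p}$ together with the hypothesis $p>c/l$, this is bounded by $e^{-pL}\le e^{-cl^2}$.

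Next I would bound the total number of lines of length $L$ in the toroidal mesh $G$. Each vertex is the starting point of exactly one horizontal and one vertical line of length $L$ (indices taken mod $n$), so there are at most $2n^2$ such lines. Applying the union bound, the probability that \emph{some} line of length $L$ is seed-free is at most
\[
2n^2 \cdot e^{-cl^2}.
\]
Since $l=\floor{\ln n}$, we have $l^2\sim (\ln n)^2$, so $e^{-cl^2}= n^{-c\ln n \,(1+o(1))}$. The factor $n^{-c\ln n}$ decays super-polynomially and thus easily absorbs the $2n^2$ prefactor, giving $2n^2 e^{-cl^2}\to 0$ as $n\to\infty$. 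This establishes the claim.

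There is no serious obstacle here: the only thing to verify is that $L=l^3$ is chosen large enough that $pL\to\infty$ fast enough to beat the $n^2$ union-bound loss, and indeed $pL\ge cl^2\sim c(\ln n)^2\gg 2\ln n$, so the bound has substantial slack. The same argument would in fact work with $L=\omega(l)$ sufficient to make $pL-2\ln n\to\infty$; the choice $L=l^3$ is just comfortably inside this range.
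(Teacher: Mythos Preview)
Your proof is correct and follows essentially the same approach as the paper: bound the probability that a fixed line of length $L$ is seed-free by $q^L\le e^{-pL}\le e^{-cl^2}$, then apply a union bound over the at most $2n^2$ lines. Your write-up is in fact slightly more careful than the paper's, since you use the inequality $1-p\le e^{-p}$ and $p>c/l$ explicitly rather than writing $q^L\sim e^{-cl^2}$.
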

\begin{proof}
The probability that the fixed line of length $L$ is non-$S^p$-line is 
$$p_\r{one~line}=q^L \sim e^{-pL} = e^{-c \cdot l^2}.$$
With the use of the union bound we get that the probability that at least one
line of length $L$ is non-$S^p$-line can be bounded as
$$p_\r{all~lines} \leq 2n^2 \cdot p_\r{one~line} \rightarrow 0.$$
\end{proof}%}}}

\begin{lemma} \label{zaire} %{{{
Let us assume that $h \rightarrow \infty$. Then the condition
$$\ln(n) - \frac{q \cdot \pi^2} {6p} - \ln(1/p) - \ln(h) \rightarrow \infty$$ 
w.h.p. implies that $\Gamma_1$ happens for arbitrarily large number of $U_k$'s. 
\end{lemma}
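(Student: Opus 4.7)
The plan is to combine a decomposition of $\Gamma_1$ into independent ``layer'' events (computing $P_1:=\P(\Gamma_1\text{ holds for one }U_k)$ as a product of simple probabilities) with a standard ``independence across disjoint cells'' concentration argument. The whole approach hinges on the fact that, since the squares $U_k$ are pairwise vertex-disjoint, the events $\{\Gamma_1\text{ holds on }U_k\}_k$ depend on disjoint coordinates of $S^p$ and are therefore mutually independent.

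First I would describe the event $\Gamma_1$ on a single $U_k$ as a sequence of $(h-1)/2$ layer-growth events. At growth step $i$ a black $(2i-1)\times(2i-1)$ square at the centre of $U_k$ expands to a $(2i+1)\times(2i+1)$ black square, provided each of the four incrementing lines adjacent to its sides (each of length $2i-1$) contains at least one $S^p$-vertex; a slightly modified first step handles the $1\times 1\to 3\times 3$ growth, where by the remark after the definition of $\Gamma_1$ the central vertex need not itself be a seed. Because the incrementing lines of different layers live in different ``shells'' of $U_k$, the per-layer events are independent and
\[
P_1 \;=\; \prod_{i=1}^{(h-1)/2}\bigl(1-q^{2i-1}\bigr)^{c_i}
\]
for constants $c_i\le 4$.

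Next I would estimate $\log P_1$ asymptotically. The key tool is
\[
\int_0^{\infty}\log(1-q^s)\,ds \;=\; -\frac{\pi^2}{6\,|\ln q|},
\]
together with $1/|\ln q| = q/p + O(1)$, which is where the precise $q/p$ prefactor of the lemma comes from. Splitting the summation into a ``small-$i$ head'', where $q^{2i-1}$ is not small and which produces the $-\log(1/p)$ correction, and a ``large-$i$ tail'' handled by the integral, should give
\[
\log P_1 \;=\; -\,\frac{2q\pi^2}{6p} \;-\; 2\log\frac{1}{p} \;+\; O(1).
\]

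Finally, let $N:=\floor{n/h}^2\sim n^2/h^2$ and let $X$ count those $U_k$ on which $\Gamma_1$ holds. By the disjointness of the $U_k$, the random variable $X$ is a sum of $N$ i.i.d.\ $\mathrm{Bernoulli}(P_1)$ indicators with $\E X = N P_1$. Multiplying the hypothesis of the lemma by $2$ and substituting the estimate of Step~2 gives $\log(N P_1)\to\infty$, so $\E X\to\infty$. Chebyshev's inequality for sums of independent indicators (the variance is at most $\E X$) then yields $X\to\infty$ in probability, which is exactly the claim that $\Gamma_1$ holds for arbitrarily many $U_k$ w.h.p. The main obstacle is Step~2: extracting the sharp constant $q\pi^2/(6p)$ (rather than the cruder $\pi^2/(6|\ln q|)$) and the precise $-\log(1/p)$ correction requires careful Euler--Maclaurin-type bookkeeping, plus correctly identifying the multiplicity $c_1$ for the first growth step, so that the additive constants line up with the hypothesis.
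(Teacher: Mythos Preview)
Your overall architecture matches the paper exactly: decompose $\Gamma_1$ on a single $U_k$ into independent ``shell'' events, obtain $\P(A_i)\ge \prod_{i\ \mathrm{odd}}(1-q^i)^4$, use disjointness of the $U_k$'s to get independence, and conclude via $\E X\to\infty$ plus a second-moment bound. So the proposal is correct.

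The one substantive difference is your Step~2, the estimation of $\log P_1$. You plan to compare the sum to the integral $\int_0^\infty\log(1-q^s)\,ds=-\pi^2/(6|\ln q|)$ and then convert $1/|\ln q|$ to $q/p+O(1)$, tracking the head/tail and the additive constants by Euler--Maclaurin-type bookkeeping. The paper sidesteps all of that with two short algebraic tricks. First, writing $V_{\mathrm{odd}}=\prod_{i\ \mathrm{odd}}(1-q^i)^4$, $V_{\mathrm{even}}=\prod_{i\ \mathrm{even}}(1-q^i)^4$ and $V_{\mathrm{all}}=V_{\mathrm{odd}}V_{\mathrm{even}}$, one has $V_{\mathrm{odd}}>p^4 V_{\mathrm{even}}$ and hence $V_{\mathrm{odd}}\ge p^2\sqrt{V_{\mathrm{all}}}$; this produces the $-2\log(1/p)$ correction for free, with no separate ``head'' analysis. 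Second, expanding $\ln V_{\mathrm{all}}=-4\sum_{i,k\ge 1}q^{ik}/k=-4\sum_{k\ge 1}\frac{1}{k(q^{-k}-1)}$ and applying Bernoulli's inequality $q^{-k}-1\ge k(p/q)$ gives directly $\ln V_{\mathrm{all}}\ge -4(q/p)\sum_k k^{-2}=-4q\pi^2/(6p)$, a clean one-sided bound with exactly the constant $q\pi^2/(6p)$ of the hypothesis, no integral and no $O(1)$ to chase. Your route works, but the paper's is considerably shorter and avoids precisely the bookkeeping you flagged as ``the main obstacle''.
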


\begin{proof}
Let $A_i$ denote the event that for given $U_i$ the condition $\Gamma_1$ holds.
Let $X$ be the random variable that equals to the number of $A_i$'s holding.
There are at least $\floor{n/h}^2 \sim (n/h)^2$ different $U_i$'s and therefore
the expected value of $X$ satisfies the relation $\E(X) \sim (n/h)^2 \P(A_i)$.
Since the $A_i$'s are independent, $\E(X) \rightarrow \infty$ implies that
w.h.p. at least one of the $A_i$'s holds. Let us estimate $\P(A_i)$ (for fixed
$U_i$) as follows: 
\begin{equation}
\label{rovnica} 
\P(A_i) \geq V_{\rm odd} \quad {\rm where} \quad V_{\rm odd}=\prod_{i \geq 1, i {\rm~is~odd}} (1-q^i)^4  
\end{equation}
Furthermore we define
$$V_{\rm even}=\prod_{i \geq 1, i {\rm~is~even}} (1-q^i)^4 \quad {\rm and} \quad 
V_{\rm all}=V_{\rm even} \cdot V_{\rm odd}$$

Note that $V_{\rm odd} < V_{\rm even}$ but $V_{\rm odd} > V_{\rm even} \cdot
(1-q)^4=V_{\rm even}\cdot p^4$. Therefore 
$V_{\rm odd} \geq \sqrt{V_{\rm all}} \cdot p^2$. Further, we compute:

\begin{eqnarray}
V_{\rm all} &=& \prod_{i=1}^\infty (1-q^i)^4 = \exp \Bigl (4\sum_{i=1}^\infty \ln(1-q^i) \Bigr ) = 
\exp \Bigl (-4 \sum_{i=1}^\infty q^i + q^{2i}/2 + q^{3i}/3 + \ldots \Bigr ) = 
\nonumber \\
&=& \exp \Bigl (-4 \sum_{i=1}^\infty \sum_{k=1}^\infty q^{ki}/k \Bigr )= 
\exp \Bigl (-4 \sum_{k=1}^\infty \sum_{i=1}^\infty q^{ki}/k \Bigr )= 
\nonumber \\
&=& \exp \Bigl (-4 \sum_{k=1}^\infty \frac{1}{k} \cdot \frac{q^k}{1-q^k} \Bigr )= 
\exp \Bigl (-4 \sum_{k=1}^\infty \frac{1}{k} \cdot \frac{1}{q^{-k}-1} \Bigr )
\nonumber
\end{eqnarray}
where in the second step we used Taylor's expansion for $\ln(1+x)$.
Furthermore, we use Bernoulli's inequality:

$$q^{-k}-1=(1+z)^k-1 \geq kz$$

\noindent where we put $1+z = 1/q$. Using this we obtain
\begin{eqnarray*}
\ln(V_{\rm all})/4 \geq -\sum_{k=1}^\infty \frac{1}{zk^2}= 
-\frac{1}{z}\sum_{k=1}^\infty \frac{1}{k^2}= 
-\pi^2/6 \cdot \frac{1}{z}= -\pi^2/6 \cdot \frac{q}{p}.
\end{eqnarray*}

\noindent Finally $\E(X) \rightarrow \infty$ can be written as:
$$(n/h)^2 \cdot \P(A_i) \geq (n/h)^2 \sqrt{\exp \Bigl (-4 \cdot \frac{\pi^2 \cdot q}{6p}
\Bigr )} \cdot p^2 \rightarrow \infty$$ 
Calculating logarithm of the last relation proves the lemma.
\end{proof} %}}}

\begin{proof} (Theorem \ref{black}). %{{{ 
Let us choose $h=2l^3+1$ and let $p$ be as in the statement of Theorem
\ref{black}. The direct calculation shows that the preconditions of Lemma
\ref{zaire} are satisfied, and therefore there a.s. exists $U_k$ that satisfies the
condition $\Gamma_1$. Moreover, since $h>l^3$, Lemma \ref{zambezi} implies that
there w.h.p. exist all incrementing vertices needed for the condition
$\Gamma_2$ to hold.  
\end{proof}
%}}}

% poznamka o moznych zlepseniach {{{
Let us note that both Theorems \ref{black} and \ref{not black} can be sharpened
a little. When improving Theorem \ref{not black} we can show (with some
additional effort) that for $p \leq 0.018/\ln(n)$ every square of size $2l \times
2l$ a.s. contains four non-$S^p$-lines such as in the Figure \ref{obrazok}
(middle-right). Existence of such structures is sufficient for showing the
existence of covering of $G$ by cages as required by Lemma \ref{cage}. For
improvement of Theorem \ref{black}, let us note that we do not actually need all
incrementing vertices whose existence is included in (\ref{rovnica}) in Lemma
\ref{zaire}. For successful ''spreading'' of black vertices it is sufficient if
only approximately half of them exist. Let us recall that in Lemma \ref{zaire} we
started with $1 \times 1$ square and this square potentially grew until the whole
$S_k$ became black. The expansion of this square of size $i \times i$ to $(i+2) \times
(i+2)$ requires the existence of the set of four incrementing vertices $U_1$.
Another set of incrementing vertices $U_2$ then causes expansion to size $(i+4)
\times (i+4)$. But in almost all situations, $U_2$ can cause expansion from $i
\times i$ to $(i+4) \times (i+4)$ directly - without need of $U_1$. This leads
to $p \geq 0.83 / \ln(p)$.

This considerations sharpen the results, but it is still an interesting question, whether there
exists a threshold $t$ such that if $t'>t$ and $p>t'/\ln(n)$, then the initial
random set of seeds is a dynamo and vice-versa. Sadly, we are not able to prove
the (non)existence of such threshold and determine its possible value. Therefore we
present a numerical study that tries to answer these questions at least
partially. By $p_z(n)$ let us denote the probability $p$, such that the random
set of seeds $S^p$ is a dynamo with probability $z$. Results of the
numerical study are presented in the Figure \ref{obrazok} (bottom). The circles
present values of $p_{0.5}(n)$. The top and the bottom points of the
''error bars'' correspond to values $p_{0.05}(n)$ and $p_{0.95}(n)$. Size of
$90\%$ confidence intervals for measured values is comparable with the diameter
of the circles.

We conclude that threshold $t$ is likely to exist and its value is near $0.25$.
Proving this statement rigorously remains an open problem.
%}}}

%{{{ 2. cast uvod
\section{Dynamos on random $4$-regular graph} \label{sec:regular}
In the previous section we analyzed a particular $4$-regular graph. In this
section we try to solve a similar task for a random $4$-regular graph under
simple majority scenario. 

Random regular graphs are very well studied. Pioneering studies in this field
were brought by Bender and Canfield \cite{11}, Bollob\' as \cite{12} and Wormald
\cite{13,14}. A systematic research in this area grew enormously since then,
partly driven by applications in many areas
such as computer science. For exhaustive survey see \cite{15}. To make this
paper self-contained, we now present the definition and basic properties that
are necessary for further reading.

We use $\c{G}_{n,d}$ to denote the uniform probability space of $d$-regular
graphs on the $n$ vertices $\{1, 2, \ldots, n\}$ (where $dn$ is even). 
Sampling from $\c{G}_{n,d}$ is therefore equivalent to taking such a graph uniformly at
random (u.a.r.). Another possible way how to define (the same) probabilistic
space is this: We construct graph with $n$ vertices (named as before) and no
edges. Then, for every vertex $i$ we construct $d$ \emph{slots} $v_{i,1}, \ldots,
v_{i,d}$. A perfect matching of these slots into $dn/2$ pairs is called a
\emph{pairing}. A pairing $P$ corresponds to a multigraph (with loops and
multiple edges
permitted), in which two vertices $i$ and $j$ are connected, if there exist
two different slots $v_{i,x}$ and $v_{j,y}$ that form a pair from $P$. Although this
process can lead to graph that is not simple, it is quite easy to show that
the probabilities of obtaining two simple graphs $G_1$ and $G_2$ are equal.
Therefore if we reject every graph that is not simple and repeat the whole process
until simple graph is found, we obtain the same probabilistic space
$\c{G}_{n,d}$ as before. 

A pairing can be selected u.a.r. in many different ways. In particular, the
slots in the pairs can be chosen sequentially. At any stage, the first slot in
the next random chosen pair can be selected using any rule whatsoever, as long
as the second slot in that pair is chosen u.a.r. from the remaining slots. For
example, one can insist that the next chosen slot is the next one available in
any pre-specified ordering of the slots, or comes from a cell containing one of
the slots in the previous chosen pair (if any such slots are still unpaired).
We shall call this the \emph{independence property} of the pairing model.

In what follows $G \in \c{G}_{n,4}$ will be random $4$-regular graph and
$l=\floor{\ln(n)}$ as before. Also, we consider the same coloring process as the
one described in preliminaries to section \ref{sec:torus}.

Let us note that we do not need to restrict our interest to simple graphs. The
coloring process is well-defined also for multi-graphs. Bender and Canfield
showed that the probability of obtaining a simple graph by the pairing-slot
process is asymptotically $\exp((1-d^2)/4)$ ($d$ being the degree of a vertex),
which is a value close to $2\%$ for $4$-regular graph and high number of
vertices. Therefore if we prove  that $S^p$ is not a dynamo w.h.p. on random
$4$-regular multi-graph, it implies that $S^p$ is not w.h.p. dynamo on random
$4$-regular graph.

%}}}

In the future text, let $G$ be random $4$-regular (simple) graph of size $n$ and
$H$ be random $4$-regular (not necessarily simple) graph of size $n$.

\begin{theorem} \label{2cast black} %{{{ 
Let $p \geq 0.12$. Then $S^p$ w.h.p. is a dynamo on $G$. 
\end{theorem}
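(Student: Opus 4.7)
The plan is to prove the theorem by the first-moment method applied to a suitable class of witness sets in the random $4$-regular multigraph $H$. The key observation is a structural characterization: $S^p$ fails to be a dynamo if and only if there exists a nonempty $W \subseteq V$ such that (a) $W \cap S^p = \emptyset$ and (b) every vertex of $W$ has at most one neighbor outside $W$, equivalently, the induced subgraph on $W$ has minimum degree $\geq 3$. One direction is an easy induction on the coloring process; the other follows by taking $W$ to be the complement of the final black set. Call such a $W$ an \emph{immune complement}. By the argument immediately preceding the theorem, it is enough to prove the result on $H$, since $\P(H\text{ is simple}) \to e^{-15/4}$ is bounded away from zero.

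Next, I express the expected number of immune complements via the configuration model. Fix $W$ with $|W|=k$ and let $a$ be the number of vertices of $W$ with exactly one external neighbor (so the other $k-a$ have all four neighbors inside $W$); necessarily $a$ is even. A direct pairing-model enumeration gives
$$\P\bigl(W\text{ immune, external count }a\bigr) \;=\; \binom{k}{a}\, 4^{a}\, \frac{(4(n-k))!}{(4(n-k)-a)!}\cdot\frac{(4k-a-1)!!\,(4(n-k)-a-1)!!}{(4n-1)!!}.$$
Multiplying by $\binom{n}{k} q^{k}$ (the choice of $W$ and the event that $W$ avoids $S^p$) and summing over $1 \leq k \leq n$ and even $0 \leq a \leq k$ yields an upper bound on $\E[\#\text{immune complements}]$.

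Third, I apply Stirling's formula: with $\alpha = k/n$ and $\beta = a/n$ the summand assumes the exponential form $\exp\bigl(n\,F(\alpha,\beta) + O(\ln n)\bigr)$ for an explicit function $F$ depending on $p$ via $q=1-p$. The theorem then reduces to showing
$$\sup_{0 \le \beta \le \alpha \le 1} F(\alpha,\beta) \;<\; 0 \qquad\text{whenever } p \geq 0.12,$$
and by monotonicity in $p$ it is enough to verify this at $p=0.12$. A further reduction is to carry out the inner maximization in $\beta$ analytically (solving $\partial F / \partial \beta = 0$), producing $\beta^{*}(\alpha)$ in closed form; substituting back yields a one-variable function $g(\alpha)$ whose negativity on $(0,1)$ must be checked.

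The main obstacle is precisely this last verification. The naive independent-binomial approximation, which replaces the inner sum by $[\alpha^{3}(4-3\alpha)]^{k}$, is too lossy: combined with $\binom{n}{k} q^{k}$ at $q = 0.88$ it gives a positive value near $\alpha = 1/2$. Hence the correlations captured by the exact pairing count really must be used. I expect the constant $0.12$ to have been calibrated so that $g$ is just barely negative, so the verification will likely split into a small-$\alpha$ regime (where the very existence of an induced subgraph of minimum degree $3$ is exponentially unlikely), a large-$\alpha$ regime (where the factor $q^{k}$ dominates the entropy of choosing $W$), and a delicate intermediate window handled by calculus or a sharpened numerical estimate.
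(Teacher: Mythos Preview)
Your approach is genuinely different from the paper's. The paper does not use a first-moment count over blocking sets at all. Instead it exploits the local tree structure of $H$: for each vertex $v$ it builds the breadth-first tree $T_v$ of height $h=\lfloor 2\lg\log n\rfloor$, and observes that if $v$ turns black in the coloring process restricted to $T_v$ (with threshold $2$ everywhere), then it certainly turns black in $H$. This reduces the question to a one-dimensional recursion on the tree: writing $Y(i)$ for the probability that a level-$i$ vertex turns black using only its subtree, one has
\[
Y(i-1)\;\ge\;p+(1-p)\bigl(3Y(i)^2(1-Y(i))+Y(i)^3\bigr).
\]
The constant $0.12$ is precisely the numerically determined value above which this recursion, started at $Y(h)=p$, climbs past $0.999$ within a bounded number of steps; a short analytic tail bound ($\varepsilon_{i-1}\le 5\varepsilon_i^3$, hence $\log\varepsilon_{i-1}\le 2\log\varepsilon_i$) then yields $1-Y(0)\le n^{-2}$, and a union bound over vertices finishes.

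Your framework is sound in principle---the structural characterization of non-dynamos via sets $W$ with $W\cap S^p=\emptyset$ and induced minimum degree at least $3$ is correct, and the configuration-model enumeration is the right tool---but the proposal has a real gap. The decisive step, verifying $\sup_{\alpha,\beta}F(\alpha,\beta)<0$ at $p=0.12$, is neither carried out nor reduced to something routine; you explicitly flag it as the ``main obstacle'' and only sketch a plan. More importantly, your working hypothesis that ``$0.12$ was calibrated so that $g$ is just barely negative'' is mistaken: the constant comes from the \emph{tree recursion} above, not from your annealed rate function, and there is no a~priori reason these thresholds coincide. First-moment counts over blocking sets in random regular graphs are prone to overcounting, because rare graphs that do carry a blocking set typically carry exponentially many of them, inflating the expectation. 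Consequently the first-moment threshold $p_{\mathrm{FM}}$ at which $\sup F$ first turns negative can sit strictly above the true percolation threshold. If $p_{\mathrm{FM}}>0.12$, your method would only establish the theorem for some larger constant; until you actually compute $F$ (with the exact pairing weights, not the independent-binomial surrogate you already note is too weak) and exhibit its negativity on the whole range, the argument does not prove the stated bound.
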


\begin{proof}
 As was justified above, we shall work with $H$ instead of $G$.
We use some properties of the random regular graph that are summarized in
\cite{15}. The graph has only small number of short cycles. Therefore, the
neighborhood of any vertex looks like a part of an infinite tree. For $v \in
V(H)$, let $T_v$ be a rooted tree that is obtained as follows: Vertex set of $T$
is a subset of $V(H)$, vertex $v$ is root of $T$. Tree $T$ consists of
$h=\floor{2 \cdot \lg \log(n)}$ \emph{levels}. Vertex $v$ forms $0$-th level of
the tree. The $(i+1)$-th level consists of all vertices that are neighbors (in
$H$) to some vertex in the $i$-th level and that are not already contained in
the  $j<i$-th level for any $j$. If $u, v$ are two vertices of $i$-th and $(i+1)$-th
level respectively, then there exists an edge between $u$ and $v$ in $T$ if
there exists an edge between $u$ and $v$ in $H$ and furthermore, there is no
other vertex $w$ than $u$ from $i$-th level of $T$ that is adjacent to $v$ (in
$H$). There are no edges within one level of $T_v$. Note that although
some edges from $H$ are not present in $T_v$, with the use of the independence
property we can easily calculate that the probability that some vertex in $i$-th
level ($i<h$) has degree lower than $4$ is $o(1)$. 

It is clear that if we run the coloring process on $T_v$ instead of $H$ (but we
still set the threshold $\psi$ to $2$ for all vertices), then the probability
that the root of $T_v$ turns black in coloring process induced by $S^p(T_v)$ is
lower than the probability that the same vertex $v$ turns black in $H$ (in the
coloring process induced by $S^p(H)$). By $Y(i)$ let us denote the probability
that the vertex of the $i$-th level of $T_v$ turns black (in the coloring
process induced by $S^p(T_v)$). We shall start analyzing the bottom level of the
tree and proceed to the top. It obviously holds that $Y(h)=p$. For the
$(l-i)$-th level ($i>0$) the situation is more complicated. In order to obtain
some estimate for $Y(l-i)$ we shall assume that the color of some vertex $u$
from $(l-i)$-th level can be changed only because of (at least) two black
neighbors of $u$ belonging to $(l-i+1)$-th level. This gives us the following
estimation:
\begin{equation}
\label{rovnicaaa}
Y(i-1) \geq p + (1-p) \cdot f_{3,\geq 2}(Y(i))
\end{equation}
where
$$f_{n,=k}(p)=\binom{n}{k} \cdot p^k \cdot (1-p)^{n-k} \quad {\rm
and} \quad f_{n,\geq k}(p)=\sum_{i=k}^n f_{n,=i} $$

\noindent Numerical calculation shows that given $p \geq 0.12$ the condition
$Y(h-100) \geq 0.999$ holds. Here we
change our approach and analyze higher levels analytically. We easily prove that
for any $p$, $q=1-p$, it holds that:

\begin{equation}
\label{rovnicabb}
1-f_{3,2}(p) \leq 5q^3
\end{equation}

\noindent Let us denote $\varepsilon_{i}=1-Y(i)$. Then we
can rewrite (\ref{rovnicaaa}) using (\ref{rovnicabb}) to the form:
$$Y(i-1) \geq p + (1-p) \cdot (1-5 \cdot \varepsilon_{i}^3)$$
And finally, we get: $\varepsilon_{i-1} \leq 5 \varepsilon_{i}^3.$
Last inequality together with numerically obtained result
$\varepsilon_{h-100} < 0.001$ implies, that
$$\log(\varepsilon_{i-1}) \leq \log(5) + 3 \log(\varepsilon_{i}) \leq 2
\log(\varepsilon_i).$$
Therefore 
$$\log(\varepsilon_{0}) \leq -(2^{h-\Theta(1)}) = -\log(n)^2/\Theta(1)
\leq -2\log(n)$$ for sufficiently big $n$ and finally $\varepsilon_{0} \leq n^{-2}.$
This means that the expected value of number of vertices from $H$ that do not turn
black is $o(1)$.
\end{proof}
%}}}

\begin{theorem} \label{2cast not black} %{{{
Let $p \leq 0.10$. Then $S^p$ w.h.p. is not a dynamo on $G$. 
\end{theorem}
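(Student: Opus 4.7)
The plan is to show that for $p\le 0.10$ there is, w.h.p., a non-empty set $W$ of white vertices preserved by the coloring process, so $S^p$ cannot be a dynamo. I begin from the structural observation that $S^p$ is not a dynamo on $G$ if and only if there exists a non-empty $W\subseteq V(G)\setminus S^p$ such that every $v\in W$ has at least three neighbors in $W$. Any such $W$ is closed under the process because no vertex of $W$ ever accumulates two black neighbors, and conversely the set of forever-white vertices is always of this form. Equivalently, $W$ is the $3$-core of the subgraph of $G$ induced by the non-seed vertices, and as in Theorem \ref{2cast black} it suffices to produce such a $W$ on the random $4$-regular multigraph $H$.

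Next I would mirror the branching-process analysis of Theorem \ref{2cast black}, but track the dual quantity. On the $4$-regular Galton--Watson tree that is the local limit of $H$, define $\pi$ via the standard message-passing equation for the $3$-peeling, dual to (\ref{rovnicaaa}):
\[
\pi = (1-p)\,f_{3,\geq 2}(\pi) = (1-p)(3\pi^2 - 2\pi^3).
\]
Here $\pi$ is the probability that a uniformly random directed half-edge has its endpoint remain in the $3$-core when one ignores the back-edge. Dividing out $\pi$ yields the quadratic $2\pi^2 - 3\pi + 1/(1-p) = 0$, which has a positive solution exactly when $p\le 1/9\approx 0.111$. Since $0.10<1/9$, the attracting fixed point $\pi^\ast\approx 0.833$ is bounded away from $0$, and the asymptotic density of the $3$-core is $(1-p)\,f_{4,\geq 3}(\pi^\ast)\approx 0.78$, so $\E\abs{W}=\Theta(n)$.

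To upgrade the expectation to an almost-sure statement, I would use the independence property of the pairing model: explore each vertex's neighborhood to depth $h=\floor{2\lg\log n}$ so the local structure is a tree with probability $1-o(1)$, and note that the recursion $\pi_{i-1}=(1-p)f_{3,\geq 2}(\pi_i)$ converges geometrically to $\pi^\ast$ (because $\abs{g'(\pi^\ast)}<1$). A second-moment estimate then closes the gap: for uniformly random $u,v$ their depth-$h$ neighborhoods are disjoint with probability $1-O(\log^2 n/n)$, so the local events ``$u\in W$'' and ``$v\in W$'' are asymptotically independent and $\var(\abs{W})=o((\E\abs{W})^2)$; Chebyshev then yields $\abs{W}=\Omega(n)$ w.h.p.

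The hard part will be making the passage from tree peeling to graph peeling rigorous. A finite truncated tree has no non-trivial $3$-core, so one cannot simply import the tree fixed point by comparing $T_v$ with $H$ (in fact peeling-monotonicity goes the wrong way: adding edges can only help survival). The honest route is a Wormald-style differential-equation analysis of the slot-by-slot peeling in the pairing model, tracking the joint distribution of remaining vertex degrees; its fluid limit coincides with the tree fixed point, and standard concentration delivers a linear-size $3$-core. Combined with the numerical check $p\le 0.10<1/9$, this yields $\abs{W}\ge 1$ w.h.p.\ and hence that $S^p$ is not a dynamo.
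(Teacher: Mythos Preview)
Your approach is correct in spirit but genuinely different from the paper's. You work backwards, characterizing the set of forever-white vertices as the $3$-core of $G[V\setminus S^p]$ and analyzing its survival via the message-passing fixed point; this cleanly identifies the sharp threshold $p=1/9$. The paper instead works forwards: it couples the spreading of black vertices to an explicit ``Sampling-Coloring'' algorithm that reveals edges of the pairing model only as black vertices touch them, tracks the triple $(b_i,m_i,r_i)$ (newly black, white with one black neighbor, untouched white) numerically with second-moment concentration for the first $60$ rounds, and then dominates the remaining new-black count by a subcritical Galton--Watson process with mean $\lambda\approx 0.9$. No fixed-point or $k$-core machinery is invoked, and no appeal to Wormald's differential-equation method is needed; the price is that only $p\le 0.10$ is obtained rather than your $p<1/9$.

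What each buys: your route is conceptually cleaner and optimal in the constant, but, as you yourself note, the local/second-moment sketch does not prove anything about the global $3$-core (core membership is not a depth-$h$ event, and monotonicity goes the wrong way), so the real work lies entirely in the Wormald-type analysis you defer. The paper's route is more self-contained---everything reduces to Chebyshev on explicit sums of indicator variables plus a standard branching-process tail bound---at the cost of a numerical phase and a slightly suboptimal constant. If you want a complete proof along your lines, you should either carry out the peeling-process differential equations explicitly (tracking the degree profile of surviving white vertices) or cite a $k$-core threshold theorem for site-percolated random regular graphs; the fixed-point computation alone is only the heuristic.
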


\noindent Before we present the actual proof of this theorem, let us recall several facts 
concerning the \emph{Galton-Watson branching process}.

Let $Z$ be a probability distribution on the nonnegative integers. The
\emph{Galton-Watson branching process} with offspring distribution $Z$ is
(loosely) defined as follows. The $0$-th generation consists of $N$
particles. For $t\geq 0$, the $(t+1)$-th generation consists of the
children of all particles in the $t$-th generation. For each particle,
the number of its children has distribution $Z$, and is independent of all the
other particles. 

Let $X$ be a (Galton-Watson) branching process with $N$
particles in the beginning, distribution $Z$ and we shall denote the number of
particles in generation $t$ by $X_t$ and the average number of children 
of a particle by $\lambda~(=\E(Z))$. Moreover, let 

$$X_{\rm sum}:=\sum_{i=0}^\infty X_t.$$

\begin{lemma} 
\label{branching}
For all $t\geq 0$, 
$$\E(X_t) = N \cdot \lambda^t \quad {\rm and~if~}\lambda<1{\rm~then}  \quad \E(X_{\rm sum}) =
\frac{N}{1-\lambda}.$$
Furthermore, if $\lambda<1$, and $\sigma^2$ is the finite variance of $Z$, then
$$\P(X_{\rm sum} > 2 \cdot \E(X_{\rm sum})) = o(1).$$
\end{lemma}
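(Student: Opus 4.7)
The plan is to prove the three assertions in turn: the expectations by elementary means and the tail bound by Chebyshev's inequality applied to $X_{\rm sum}$ decomposed into i.i.d.\ contributions, one for each initial particle. The only computation with any substance is the variance of the total progeny of a single initial particle, which I would obtain from a self-similarity identity and the law of total variance.

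For the first identity I would induct on $t$. The base case $t=0$ is immediate. Conditionally on $X_t$, the generation $X_{t+1}$ is a sum of $X_t$ i.i.d.\ copies of $Z$, so $\E(X_{t+1}\mid X_t)=\lambda X_t$ and hence $\E(X_{t+1})=\lambda\E(X_t)=N\lambda^{t+1}$. Since $X_t\ge 0$, Tonelli lets us swap sum and expectation to obtain
\[
\E(X_{\rm sum})=\sum_{t=0}^{\infty}\E(X_t)=N\sum_{t=0}^{\infty}\lambda^{t}=\frac{N}{1-\lambda}
\]
whenever $\lambda<1$.

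For the tail bound, I would write $X_{\rm sum}=\sum_{i=1}^{N}Y_i$, where $Y_i$ denotes the total progeny of the subtree rooted at the $i$-th initial particle (including that particle). The $Y_i$ are i.i.d., so $\var(X_{\rm sum})=N\var(Y_1)$. To compute $\var(Y_1)$ I would use the self-similar decomposition
\[
Y_1=1+\sum_{j=1}^{Z}Y^{(j)},
\]
where the $Y^{(j)}$ are i.i.d.\ copies of $Y_1$ independent of $Z$. Taking variances via the law of total variance gives
\[
\var(Y_1)=\E(Z)\var(Y_1)+(\E(Y_1))^{2}\var(Z)=\lambda\var(Y_1)+\frac{\sigma^{2}}{(1-\lambda)^{2}},
\]
so $\var(Y_1)=\sigma^{2}/(1-\lambda)^{3}$.

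Combining these in the regime $N\to\infty$ relevant to the application, Chebyshev's inequality yields
\[
\P\bigl(X_{\rm sum}>2\E(X_{\rm sum})\bigr)\le\frac{\var(X_{\rm sum})}{\E(X_{\rm sum})^{2}}=\frac{N\sigma^{2}/(1-\lambda)^{3}}{N^{2}/(1-\lambda)^{2}}=\frac{\sigma^{2}}{N(1-\lambda)}=o(1).
\]
The most delicate point is justifying the self-similar variance calculation, since naively applying the law of total variance presupposes $\var(Y_1)<\infty$; this can be secured by first bounding $\var\bigl(\sum_{t=0}^{T}X_t\bigr)$ by a convergent geometric series in $T$ (using the single-generation recursion $\var(X_{t+1})=\sigma^{2}\lambda^{t}+\lambda^{2}\var(X_{t})$ for one initial particle) and then passing to the limit $T\to\infty$.
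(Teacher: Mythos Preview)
Your proof is correct and follows the same route as the paper's: bound $\var(X_{\rm sum})$ and apply Chebyshev's inequality. The only difference is that the paper does not derive the variance but simply quotes $\var(X_{\rm sum})\le N\sigma^{2}/(1-\lambda)^{2}$ from Athreya--Ney, whereas you obtain the exact value $N\sigma^{2}/(1-\lambda)^{3}$ via the self-similarity identity (and take care to justify finiteness first); either exponent gives a Chebyshev bound of order $1/N=o(1)$, so the discrepancy is immaterial.
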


\begin{proof}
The proof of the first and the second statement is trivial. For the proof of the last
statement, we use the first part of the lemma and variance estimation $\var(X_{\rm sum})
\leq N \sigma^2 / (1-\lambda)^2$ (see for instance \cite{10}). Therefore
Chebyshev inequality is applicable, leading to the required result. 
\end{proof}

\noindent Now we are ready to prove the main result of this section.

\begin{proof} (Theorem \ref{2cast not black})
 As was justified above, we shall work with $H$ instead of $G$.
Although an approach quite different from that used in previous section is to be
used for proving the Theorem \ref{2cast black}, there is one similarity: we
first numerically compute how the coloring process evolves (w.h.p.) up to the
certain time and only then we solve the problem analytically by finding analogy
with easily solvable branching process.

In Table \ref{tab:algoritmus} we present a randomized algorithm that simulates
the coloring process on random $4$-regular multi-graph. The algorithm returns
$1$ with the same probability as is the probability that $S^p$ is a dynamo on
random $4$-regular multi-graph. Let us recall that the analysis of the coloring
process on multi-graphs is sufficient for proving the statement of this theorem. 

The main idea behind the algorithm is that the edges in $H$ are not sampled at
once. Instead, in the $i$-th step of the coloring process we sample only edges
corresponding to slots that belong to black vertices and have not yet been
sampled (that is, $B_i$). By doing this, the sampling of the graph $H$ takes
place ''in the same time'' as the black color spreads. Because of independence
property, this leads to the same probabilistic space of coloring processes as if
we first sample the whole graph $H$, then the random initial coloring and only
then start with the coloring process.

The meaning of used symbols is: $B_i$ (as defined in the preliminaries) is
the set of vertices that turns black in the $i$-th step of the coloring process;
$M_i$ is the set of white vertices that have exactly one black neighbor in the
$i$-th step of the coloring process (excluding $B_i$); $T_i$ are neighbors of $B_i$
and they are split up to the sets $T_i^k$ ($k=1,2,3$) according to their
future contribution to one of following sets: $B_{i+1}, B_{i+1}, M_{i+1}$. 

\begin{table}
\label{tab:algoritmus}
\caption{Sampling-Coloring (SC) Algorithm}
\begin{tabular}{|p{0.05\textwidth}|p{0.85\textwidth}|}
\hline \hline
1. & Sample the set $B_0$ and let $R_0:=V(H)-B_0$. \\
%\hline
2. & Sample the neighbors of all vertices from $B_0$. Let $M_1$ be those
vertices from $R_0$ that have one neighbor in $B_0$ and let $B_1$ be those
vertices that have two or more neighbors. Let $R_1:=R_0-B_1-M_1$. \\
%\hline
3. & Let $i:=1$. \\
%\hline
4. & Sample the neighbors of $B_i$ and denote by $T_i$ the set of them. \\
%\hline
5. & Let $T_i^1 \subseteq T_i \cap R_i$ be the vertices that have at least two
neighbors in $B_i$. \\
%\hline
6. & Let $T_i^2 := T_i \cap M_i$. \\
%\hline
7. & Let $T_i^3 := T_i-T_i^1-T_i^2$. \\
%\hline
8. & Let $B_{i+1}:=T_i^1 \cup T_i^2$. \\
%\hline
9. & Let $M_{i+1}:=(M_i - T_i^2) \cup T_i^3$. \\
%\hline
10. & Let $R_{i+1}:=R_i-B_{i+1}-M_{i+1}$. \\
%\hline
11. & Let $i:=i+1$. \\
%\hline
12. & If $B_i=\emptyset$ and $R_i=\emptyset$ return 1; if $B_i=\emptyset$ and
$R_i \neq \emptyset$ return 0.\\ 
%\hline
13. & Go to line 4. \\ \hline \hline 
\end{tabular} 
\end{table} 

We now calculate some estimations for the cardinalities of $B_i, R_i$ and
$T_i^k$. For abbreviation we shall use $x$ for $\abs{X}$ for all used sets. We
shall restrict the considerations to $i \leq 60$ and $p=0.10$. Note, that under
these assumptions all cardinalities are $\Theta(n)$. We define $z_i$ as the
ratio of number of empty slots of the vertices from $B_i$ to the total number of empty
slots in the $i$-th step, that is:

\begin{equation}
\label{rov:z}
z_i \leq \frac{2 \cdot b_i} {4 \cdot r_i + 3 \cdot m_i + 2 \cdot b_i}
\end{equation}

\noindent for $i>0$ and $z_0=|B_0|/|V(H)|$. By $f_{n,\geq k}$, $f_{n,=k}$ we
shall mean the functions defined in the proof of Theorem \ref{black}.  Now the
following estimations hold:

\begin{lemma}
For abbreviation, let $X:=t_k^1$. Then, if $r_k,b_k=\Theta(n)$, then  
$$\E(X)=r_kf_{4,\geq 2}(z_k) (1+o(1)) \quad {\rm and} \quad \var(X) < 
E(X) \Theta(1) + \E^2(X) o(1).$$
\end{lemma}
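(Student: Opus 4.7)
The plan is to write $X = \sum_{v \in R_k} X_v$ where $X_v$ is the indicator that $v \in T_k^1$, and to compute both moments via the independence property of the pairing model. I condition throughout on the algorithm's history up to step $k$, so the sets $R_k, M_k, B_k$ and their slot counts are fixed. At this stage the number of unpaired slots is $N_k := 4 r_k + 3 m_k + 2 b_k$, of which $2 b_k$ belong to $B_k$; the hypothesis $r_k, b_k = \Theta(n)$ forces $N_k = \Theta(n)$, and the definition \eqref{rov:z} then reads $z_k = 2b_k/N_k + o(1)$.

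For the expectation, I would fix $v \in R_k$ and expose the four slots of $v$ one at a time, invoking the independence property to choose each partner uniformly from the current pool. At each of the four exposures, at most a constant number of slots has been removed from a pool of size $\Theta(n)$, so the conditional probability of landing in $B_k$ equals $z_k(1+o(1))$. The joint distribution of the four binary outcomes therefore matches $\mathrm{Bin}(4, z_k)$ atom-by-atom up to a uniform $(1+o(1))$ factor, giving $\P(X_v = 1) = f_{4,\geq 2}(z_k)(1+o(1))$, and linearity yields $\E(X) = r_k f_{4,\geq 2}(z_k)(1+o(1))$.

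For the variance, I split
\[
\var(X) = \sum_{v \in R_k} \var(X_v) + \sum_{v \neq w} \mathrm{cov}(X_v, X_w).
\]
Since each $X_v \in \{0,1\}$, $\var(X_v) \leq \E(X_v)$, so the first sum is bounded by $\E(X) = \E(X)\,\Theta(1)$, which supplies the first summand in the stated bound. For the covariance term I would expose the eight slots of $v$ and $w$ jointly and repeat the pool argument: removing up to a constant number of slots still perturbs the $B_k$-ratio by only a $(1+o(1))$ factor, so $\P(X_v = 1,\, X_w = 1) = \P(X_v = 1)\,\P(X_w = 1)\,(1+o(1))$. Hence $\mathrm{cov}(X_v, X_w) = \P(X_v=1)\,\P(X_w=1)\,o(1)$, and summing over the at most $r_k^2$ ordered pairs gives $\E^2(X)\,o(1)$, matching the second summand.

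The main obstacle will be to propagate the $(1+o(1))$ errors honestly. One must verify that the slot-by-slot exposure really yields uniform multiplicative control, so that the four (respectively eight) atomic outcomes combine without cross-terms that would ruin the leading constant, and that the event $\{v \in T_k^1\}$ truly depends only on where $v$'s own slots are paired (it does, because the defining condition is that at least two of them land in $B_k$, independent of $T_k \cap M_k$ or $T_k^2$). Since the lemma will only be applied for bounded $k \leq 60$ with $r_k, b_k = \Theta(n)$, these multiplicative errors never accumulate beyond $1+o(1)$, and the Chebyshev step in the eventual proof of Theorem \ref{2cast not black} goes through.
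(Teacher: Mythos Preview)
Your proposal is correct and follows essentially the same approach as the paper: both write $X$ as a sum of indicator variables over $v \in R_k$, compute $\P(X_v=1)=f_{4,\geq 2}(z_k)(1+o(1))$ for the expectation, and bound the variance via $\sum_{v\neq w}\E(X_vX_w)+\E(X)-\E^2(X)$ together with the pairwise estimate $\E(X_vX_w)=(f_{4,\geq 2}(z_k))^2(1+o(1))$. Your write-up is in fact more detailed than the paper's, which simply asserts the single- and two-vertex probability estimates without the slot-by-slot exposure argument you supply.
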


\begin{proof}
Let $X=\sum_i X_i$ where $X_i$ is an indicator random variable indicating that
$i$-th vertex from $R_k$ has at least two neighbors from $B_k$, that is, it
belongs to $T_k^1$ (and consequently to $B_{k+1}$). The probability that this
happens (that is, $X_i=1$) is clearly $f_{4,\geq 2}(z_k) (1+o(1))$. There are
$r_k$ such vertices, which proves the first part of the lemma. For the second
part, let us compute:

$$\var(X)=\E(X^2)-\E^2(X)=\sum_{i,j}\E(X_iX_j)-\E^2(X)= \sum_{i \neq j}
\E(X_iX_j) + \E(X) - \E^2(X).$$
Clearly for $i \neq j$ holds $\E(X_iX_j)=(f_{4,\geq 2}(z_k))^2 (1+o(1))$. There
are $r_k(r_k-1)$ such terms in the sum. The second part of the
lemma follows by straightforward calculation.
\end{proof}

From this lemma and from Chebyshev inequality we have, that if $r_k,b_k=\Theta(n)$,
then with high probability $t_k^1=r_kf_{4,\geq 2}(z_k) \gamma$, where $\gamma$ is some
constant that is arbitrarily close to $1$. Quite similarly, we can derive such
relations for cardinalities of all sets of SC algorithm. This gives us:

\begin{tabular}{p{0.4\textwidth}p{0.4\textwidth}}
\begin{eqnarray}
b_0 &=& p \cdot n \cdot \gamma \nonumber \\
r_0 &=& n - b_0 \nonumber \\
b_1 &=& r_0 \cdot f_{4,\geq 2} (z_0) \cdot \gamma \nonumber \\
m_1 &=& r_0 \cdot f_{4,=1} (z_0) \cdot \gamma \nonumber \\
r_1 &=& r_0 - b_1 - m_1 \nonumber \\
t_i^1 &=& r_i \cdot f_{4,\geq 2}(z_i) \cdot \gamma \nonumber 
\end{eqnarray}
&
\begin{eqnarray}
t_i^2 &=& m_i \cdot f_{3,\geq 1}(z_i) \cdot \gamma \nonumber \\
t_i^{'2} &=& m_i \cdot f_{3, \geq 1}(z_i) \cdot \gamma \nonumber \\
t_i^3 &=& r_i \cdot f_{4,=1}(z_i) \cdot \gamma \nonumber \\
b_{i+1} &=& t_i^1 + t_i^2 \nonumber \\
m_{i+1} &=& m_i+t_i^3-t_i^{'2} \nonumber \\
r_{i+1} &=& r_i-b_{i+1}-(m_{i+1}-m_i) \nonumber 
\end{eqnarray}
\end{tabular}

\noindent For proving that the coloring does not form a dynamo we need to upper
bound $m_i$ and $b_i$ and lower bound $r_i$. Therefore we put $z_i$ as big as
possible (see (\ref{rov:z})), and furthermore we put $\gamma=1.0001$ in all
equations above except for the case of $t_i^{'2}$, where we use $\gamma=0.9999$.
With the use of this approach we obtain the following results:

$$m_{60} < 0.36n \quad b_{60} < 10^{-8}n \quad r_{60} > 0.41n \quad
\Big |\bigcap_{i=0}^{60}B_i \Big | <0.24n$$

\noindent The statement of the theorem looks now very persuasive - the fraction
of vertices that turn black in the $60$-th step of the coloring process is
lower than $10^{-8}$. However, we can not continue with evaluation of the above 
equations indefinitely, since we need all quantities to be at least $\Theta(n)$.
Otherwise, it would be hard to bound $\gamma$ present in these equations.
However, we can use the analogy with the branching process discussed by Lemma
\ref{branching}. Let us now consider the following conditions: 

\begin{equation}
\label{podmienky}
m_k > 0.37n \quad b_k > 10^{-6}n \quad r_k < 0.40n \quad
\Big |\bigcap_{i=0}^kB_i \Big | >0.25n
\end{equation}

\noindent We prove that w.h.p. there is no such $k$ that would satisfy a
condition from (\ref{podmienky}). We already concluded that this is true for $k
\leq 60$. Let us fix some coloring process on the graph $H$ such that at least
one condition from (\ref{podmienky}) holds and let $k'$ be the minimal $k$ for
which this happens. It is easy to calculate that until no condition from
(\ref{podmienky}) holds, the expected number of vertices that turn black in the
$(60+k)$-th step ($k \geq 0$) can be upper bounded by the expected
number of particles that exists in the $k$-th generation of branching
process with parameters $N=10^{-8}n$ and $\lambda=0.9$. This gives us the
equality $\E(X_{\rm sum})=10^{-7}$. Note also that any of the conditions from
(\ref{podmienky}) implies that the number of the particles in the branching
process exceeds $100N=10 \cdot \E(X_{\rm sum})$, which by Lemma \ref{branching}
happens with probability $o(1)$. Therefore, w.h.p. less than $40\%$
from all vertices turn black. 
\end{proof} 

Finally, there are rounding errors in our numerical analysis. Influence of
these errors can be estimated by multiplication (or division) of the result of
each numerical operation by some number close to $1$. In the proof of Theorem
\ref{2cast not black} this is clearly hidden in our choice of $\gamma$; in the
proof of Theorem \ref{2cast black} we can simple multiply the right side of
(\ref{rovnicaaa}) by $1-10^{-6}$ which produces more significant ''error'' than
the error of the machine. However, the result $Y(h-100) \geq 0.999$ still holds.

%}}}

%bibliografia {{{

\end{document}